\theoremstyle{plain}
\newtheorem{theorem}{Theorem}[section]
\newtheorem{proposition}[theorem]{Proposition}
\newtheorem{lemma}[theorem]{Lemma}
\newtheorem{corollary}[theorem]{Corollary}
\theoremstyle{definition}
\newtheorem{definition}[theorem]{Definition}
\theoremstyle{remark}
\newtheorem{remark}[theorem]{Remark}
\newcommand{\game}{\mathcal{G}}
\newcommand{\agentset}{\mathcal{N}}
\newcommand{\edgeset}{\mathcal{E}}
\newcommand{\actionset}[1]{\mathcal{A}_{#1}}
\newcommand{\NE}{\mathbf{\bar{x}}}
\newcommand{\A}{\mathbf{A}}
\newcommand{\defeq}{\vcentcolon=}
\newcommand{\x}{\mathbf{x}}
\newcommand{\y}{\mathbf{y}}
\newcommand{\zeros}{\mathbf{0}}
\newcommand{\epsX}{\epsilon_X}
\newcommand{\epsY}{\epsilon_Y}
\newcommand{\R}{\mathbb{R}}
\newcommand{\X}{\mathcal{X}}
\icmltitlerunning{Stability of Multi-Agent Learning: Convergence in Network Games with Many Players}
\begin{document}

\twocolumn[
\icmltitle{Stability of Multi-Agent Learning:\\ Convergence in Network Games with Many Players}

% It is OKAY to include author information, even for blind
% submissions: the style file will automatically remove it for you
% unless you've provided the [accepted] option to the icml2023
% package.

% List of affiliations: The first argument should be a (short)
% identifier you will use later to specify author affiliations
% Academic affiliations should list Department, University, City, Region, Country
% Industry affiliations should list Company, City, Region, Country

% You can specify symbols, otherwise they are numbered in order.
% Ideally, you should not use this facility. Affiliations will be numbered
% in order of appearance and this is the preferred way.
\icmlsetsymbol{equal}{*}

\begin{icmlauthorlist}
\icmlauthor{Aamal Hussain}{yyy,equal}
\icmlauthor{Dan Leonte}{yyy,equal}
\icmlauthor{Francesco Belardinelli}{yyy}
\icmlauthor{Georgios Piliouras}{comp,sch}
\end{icmlauthorlist}

\icmlaffiliation{yyy}{Imperial College London}
\icmlaffiliation{comp}{Singapore University of Technology and Design}
\icmlaffiliation{sch}{DeepMind}

\icmlcorrespondingauthor{Aamal Hussain}{aamal.hussain15@imperial.ac.uk}

% You may provide any keywords that you
% find helpful for describing your paper; these are used to populate
% the "keywords" metadata in the PDF but will not be shown in the document
\icmlkeywords{Multi-Agent Learning  \and Learning in Games \and Reinforcement Learning}

\vskip 0.3in
]

% this must go after the closing bracket ] following \twocolumn[ ...

% This command actually creates the footnote in the first column
% listing the affiliations and the copyright notice.
% The command takes one argument, which is text to display at the start of the footnote.
% The \icmlEqualContribution command is standard text for equal contribution.
% Remove it (just {}) if you do not need this facility.

%\printAffiliationsAndNotice{}  % leave blank if no need to mention equal contribution
\printAffiliationsAndNotice{\icmlEqualContribution} % otherwise use the standard text.

\begin{abstract}
The behaviour of multi-agent learning in many player games has been shown to display complex dynamics outside of restrictive examples such as network zero-sum games. In addition, it has been shown that convergent behaviour is less likely to occur as the number of players increase. To make progress in resolving this problem, we study Q-Learning dynamics and determine a sufficient condition for the dynamics to converge to a unique equilibrium in any network game. We find that this condition depends on the nature of pairwise interactions and on the network structure, but is explicitly independent of the total number of agents in the game. We evaluate this result on a number of representative network games and show that, under suitable network conditions, stable learning dynamics can be achieved with an arbitrary number of agents.
\end{abstract}

\section{Introduction}

Determining the convergence of multi-agent learning is arguably amongst the most studied problems in game theory and online learning \cite{anagnostides:last-iterate,bai:optimistic,ewerhart:fp}. In this setting, agents are required to explore their state space to determine optimal actions, whilst simultaneously aiming to maximise their expected reward. To do this, each agent must react to the changing behaviour of the other agents so that, from the perspective of any given agent, the environment is non stationary. A large body of recent work has shown that this non stationarity leads to complex behaviours being displayed by learning dynamics in games with many agents \cite{sato:rps,piliouras:arbitrarilycomplex}. In addition, even in games where convergent behaviour can be achieved, it can be to one of multiple equilibria \cite{piliouras:poincare,villatoro:convention,sanders:chaos}. In fact, recent work suggests that as the number of agents in the system increase the likelihood for chaotic dynamics increases. This presents a significant challenge for making predictions in multi-agent settings.

Nevertheless, multi-agent learning has been the major driver of a number of successes in Artificial Intelligence and Machine Learning. Such examples include strong performance in competitive games \cite{,brown:poker,tuyls:stratego}, resource allocation \cite{grammatico:NAG,amelina:load-balancing} and robotics \cite{hamann:swarm,hernandes:patrolling}. Due to these successes, and the increasing use of multi-agent systems, it becomes important to develop a theoretical understanding of learning algorithms in various settings.

To this end, a number of advances have been made which consider learning in multi-agent settings. Indeed, strong positive results on convergence have been found in cooperative settings, such as \emph{potential games} \cite{candogan:nearpotential,harris:fp} and competitive settings, such as zero-sum network games \cite{piliouras:zerosum,kadan:exponential,cai:minimax}. However, a general framework for understanding learning behaviour must extend beyond these settings. In \cite{hussain:aamas} a strong positive result on convergence was found which states that an equilibrium can be reached in any game, given sufficient exploration by all agents. Unfortunately, the condition requires that the amount of exploration increases with the number of agents. To make matters worse, \cite{sanders:chaos} show that, as the number of agents increase in the game, learning dynamics are more likely to display chaotic behaviour. 

However, both of these works did not impose any structure on the interaction between agents. Rather, \cite{sanders:chaos} assume that all agents interact with all others. In reality, agents are more likely to interact according to an underlying communication network. In economic settings this may correspond to social interactions between agents, whilst in machine learning settings, networks are often used to enforce an underlying structure to the model. 

\paragraph{Model and Contributions} In light of this, we study learning in \emph{network games}, where interactions between agents can be constrained. On this model, we study the \emph{Q-Learning} dynamic \cite{sato:qlearning,tuyls:qlearning}, a well studied learning dynamic captures the balance between agents who explore their state space whilst maximising their reward.

Our main result tightens the requirement of sufficient exploration found by \cite{hussain:aamas} to achieve convergence to a unique equilibrium in any network game. In particular we find that the amount of exploration depends on the nature of the interaction between agents and, more importantly, the structure of the network. We examine how our bound explicitly depends on the total number of agents in the system and find that, for certain networks, there is no explicit dependence. This enables a higher number of agents to be introduced in the system without compromising stability. In addition, our result applies to all network games, and not only network zero sum games. In fact, we show how our results relate to existing statements in the literature. Finally, we validate our findings on a number of representative classes of games and networks.

% \begin{enumerate}
%     \item Here is a problem - Achieving convergence in multi agent settings is hard due to the ubiquity of complex dynamics which has been shown in a wide variety of games and increases with the number of players.
%     \item It is an important problem - Most complex problems require the interaction of multiple adaptive agents. 
%     \item It is an unsolved problem - A growing body of research has looked at how to design online learning algorithms. However, the majority of analysis focuses on certain classes of games and chaos is shown in certain specific examples. 
%     \item Here is my solution - We determine the onset of chaotic dynamics in general network games and show that the behaviour is linked to the number of neighbours of each agent, rather than the total number of agents.
%     \item My solution works - We apply techniques derived from statistical physics to analyse an exploration-exploitation dynamic and show that the onset of chaos is linked to the number of neighbours, the correlation between payoffs and the amount of exploration. We also show find a sufficient condition for the dynamic to converge to a unique equilibrium in any network game which shares the same link. Importantly, we find that the total number of agents does not play an factor in convergence or chaos.
% \end{enumerate}

\paragraph{Related Work}

The theory of evolutionary game dynamics models multi-agent interactions in which agents improve their actions through \emph{online learning} \cite{shalev:online}. The premise is that popular learning algorithms such as \emph{Hedge} \cite{krichene:hedge}, online gradient descent \cite{kadan:exponential} and Q-Learning \cite{sutton:barto,schwartz:MARL} can be approximated in continuous time by a dynamical system \cite{mertikopoulos:reinforcement,krichene:thesis,tuyls:qlearning}. This enables tools from the study of dynamical systems to be used to analyse the behaviour of the learning algorithm. This approach has yielded a number of successes, most notably in \emph{potential games} \cite{piliouras:potential,candogan:nearpotential,Monderer1996PotentialGames}, which model multi-agent cooperation, and \emph{network zero sum games} \cite{cai:minimax,abernethy:hamiltonian}, which models competition. In these settings, it is known that a number of learning dynamics converge to an equilibrium \cite{kadan:exponential,ewerhart:fp,piliouras:zerosum}.

Outside of these classes, the behaviour of learning is less certain \cite{anagnostides:last-iterate}. In particular, it is known that learning dynamics can exhibit complex behaviours such as cycles \cite{piliouras:cycles,imhof:cycles,pangallo:bestreply,shapley:twoperson} and chaos \cite{svs:chaos,chakraborty:chaos,sato:rps,pangallo:taxonomy}. Indeed, \cite{galla:complex} showed that the Experience
Weighted Attraction (EWA) dynamic, which is closely related to Q-Learning \cite{piliouras:zerosum} achieves
chaos in classes of two-player games. Advancing this result,
\cite{sanders:chaos} showed that chaotic dynamics become more prevalent as the number of agents increase, regardless of the exploration rates. Similar to the work in this paper, \cite{hussain:aamas} determine a sufficient condition on the exploration rates for Q-Learning to converge in any game, yet they also find that this condition increases with the number of agents. This presents a strong barrier in placing guarantees on the behaviour in multi-agent systems with many agents, outside of restrictive settings.

Our work also employs a number of tools from the study of variational inequalities in game theory. This is a well studied framework for analysing the structure of equilibrium sets in a game \cite{melo:network,facchinei:VI} and for studying the convergence of algorithms equilibrium seeking algorithms \cite{tatarenko:monotone,mertikopoulos:finite,mertikopoulos:concave,sorin:composite}. Recent advances in this field begin to consider the properties of network games. Notably, \cite{parise:network,melo:network} determine conditions under which the Nash Equilibrium of a network game is unique, and how these relate to properties of the network. Similarly, \cite{melo:qre} shows the uniqueness of various formulations of the Quantal Response Equilibrium (QRE) under particular choices of payoff functions. Whilst our results use similar techniques, we do not make such assumptions on the nature of the payoffs, but rather parameterise our final condition on the nature of interactions between agents. In addition, we consider the stability of learning.

In our work, we aim to address the problem of convergence in many-agent systems by considering games which are played on a network \cite{cai:minimax}. Extending the work of \cite{hussain:aamas}, we are able to find a sufficient condition on exploration rates so that the Q-Learning dynamics converge to a unique equilibrium. Importantly, we show that this is independent of the total number of agents in the system. To our knowledge this is the first work which shows the convergence of Q-Learning in arbitrary network games.

\section{Preliminaries} \label{sec::Prelims}
    
We begin in Section \ref{sec::model} by defining the network game model, which is the setting on which we study the Q-Learning dynamics, which
we describe in Section \ref{sec::LearningModel}.

\subsection{Game Model}\label{sec::model}

In this work, we consider \textit{network polymatrix games} \cite{cai:minimax}. A Network Game is described by the tuple $\game = (\agentset, \edgeset, (u_k, \actionset{k})_{k \in
\agentset})$, where $\agentset$ denotes a finite set of players $\agentset$ indexed by $k = 1, \ldots, N$. Each agent can choose from a finite set of actions $\actionset{k}$ indexed by $i = 1, \ldots, n$. We denote the
\emph {strategy} $\x_k$ of an agent $k$ as the probabilities with which they play their actions. Then, the set of all strategies of agent $k$ is $\Delta(\actionset{k}) := \left\{ \x_k
\in \R^{n} \, : \, \sum_i x_{ki} = 1,\, x_ {ki} \geq 0 \right\}$. Each agent is also given a payoff function 
$u_k \, : \Delta(\actionset{k}) \times \Delta(\actionset{-k}) \rightarrow \R$ where $\actionset{-k}$ denotes the action set
of all agents other than $k$.
Agents are connected via an underlying network defined by $\edgeset$. In particular, 
$\edgeset$ consists of pairs $(k, l) \in \agentset \times \agentset$ of connected agents $k$ and $l$. An equivalent way to define the network is through an \emph{adjacency matrix} $G$ so that
\begin{equation*}
    [G]_{k,l} = \begin{cases}
        1, \text{ if agents $k, l$ are connected} \\
        0, \text{ otherwise}
    \end{cases}.
\end{equation*}
It is assumed that the network is undirected, so that $G$ is a symmetric matrix. Each edge $(k, l) \in \edgeset$ corresponds to a pair of payoff matrices
$A^{kl}$, $A^{lk}$. With these specifications, the payoff received by each agent $k$ is given by
\begin{equation} \label{eqn::GPGPayoffs} u_k(\x_k, \x_{-k}) = \sum_{(k, l) \in \edgeset} \x_k \cdot A^{kl} \x_l.
\end{equation}
 For any $\x \in \Delta =: \times_k \Delta(\actionset{k})$, we can define the reward to agent $k$ for playing action $i$ as $r_{ki}(\x_{-k}) = \frac{\partial u_{ki}(\x)}{\partial x_{ki}}$. Under this notation, $u_k(\x_k, \x_{-k}) = \langle \x_k, r_k(\x) \rangle$. With this in place, we can define an equilibrium solution for the game.

\begin{definition}[Quantal Response Equilibrium (QRE)] A joint mixed strategy $\NE \in \Delta$ is a \emph
{Quantal Response Equilibrium} (QRE) if, for all agents $k$ and all actions $i \in \actionset{k}$
    \begin{equation*}
        \NE_{ki} = \frac{\exp(r_{ki}(\NE_{-k})/T_k)}{\sum_{j \in \actionset{k}} \exp(r_{kj}(\NE_{-k})/T_k)}.
    \end{equation*}
\end{definition}

The QRE \cite{camerer:bgt} is the prototypical extension of the Nash Equilibrium to the case of agents with bounded
rationality, parameterised by the \emph{exploration rate} $T_k$. In particular, the limit $T_k \rightarrow 0$
corresponds exactly to the Nash Equilibrium, whereas the limit $T_k \rightarrow \infty$ corresponds to a purely
irrational case, where action $i \in \actionset{k}$ is played with the same probability regardless of its associated reward. The link between the QRE and the Nash Equilibrium is made stronger through the following result.

\begin{proposition}[\cite{melo:qre}]
    Consider a game $\game = (\agentset, \edgeset, (u_k, \actionset{k})_{k \in
\agentset})$ and let $T_1, \ldots, T_N > 0$ be exploration rates. Define the perturbed game $\game^H = (\agentset, \edgeset, (u_k^H, \actionset{k})_{k \in
\agentset})$ with the payoff functions
\begin{equation*}
    u_k^H(\x_k, \x_{-k}) = u_k(\x_k, \x_{-k}) - T_k \langle \x_k, \ln \x_{k} \rangle.
\end{equation*}
Then $\NE \in \Delta$ is a QRE of $\game$ if and only if it is a Nash Equilibrium of $\game^H$.
\end{proposition}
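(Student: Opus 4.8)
The plan is to reduce the claim to a collection of single-agent concave maximization problems and to observe that the first-order conditions for those problems coincide exactly with the softmax form defining the QRE. Fix a profile $\NE$ and an agent $k$. Because the game is polymatrix, the reward $r_k(\x_{-k})$ does not depend on $\x_k$, so the perturbed payoff, viewed as a function of agent $k$'s own strategy with the others held fixed at $\NE_{-k}$, is
\begin{equation*}
    u_k^H(\x_k, \NE_{-k}) = \langle \x_k, r_k(\NE_{-k}) \rangle - T_k \langle \x_k, \ln \x_k \rangle.
\end{equation*}
By definition, $\NE$ is a Nash equilibrium of $\game^H$ precisely when, for every $k$, the strategy $\NE_k$ maximizes this expression over $\Delta(\actionset{k})$. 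Hence it suffices to show that the unique maximizer of the right-hand side is exactly the softmax vector appearing in the QRE definition; quantifying over all $k$ then yields both directions of the equivalence at once.

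The next step is to exploit strict concavity. The first term above is linear in $\x_k$, while $\x_k \mapsto -T_k \langle \x_k, \ln \x_k \rangle$ is strictly concave on the simplex since $T_k > 0$. Therefore the maximizer is unique and first-order stationarity is both necessary and sufficient for optimality. Before writing the stationarity conditions, I would verify that this maximizer lies in the relative interior of $\Delta(\actionset{k})$, so that the inequality constraints $x_{ki} \ge 0$ are inactive and only the normalization constraint matters. This follows because $\partial/\partial x_{ki}(x_{ki}\ln x_{ki}) = \ln x_{ki} + 1 \to -\infty$ as $x_{ki} \to 0^+$, so the objective's gradient points strictly inward near any face on which a coordinate vanishes, ruling out a boundary optimum. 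This interiority argument is the one place requiring genuine care, and I expect it to be the main (if modest) obstacle.

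With interiority established, I would form the Lagrangian for the single equality constraint $\sum_i x_{ki} = 1$ with multiplier $\lambda$ and set the partial derivatives to zero, obtaining
\begin{equation*}
    r_{ki}(\NE_{-k}) - T_k(\ln x_{ki} + 1) - \lambda = 0
\end{equation*}
for each action $i$. Solving for $x_{ki}$ gives $x_{ki} \propto \exp(r_{ki}(\NE_{-k})/T_k)$, and enforcing $\sum_i x_{ki} = 1$ fixes the normalization to yield exactly
\begin{equation*}
    x_{ki} = \frac{\exp(r_{ki}(\NE_{-k})/T_k)}{\sum_{j \in \actionset{k}} \exp(r_{kj}(\NE_{-k})/T_k)}.
\end{equation*}
Since this is the unique stationary point of a strictly concave program, it is the unique maximizer of $u_k^H(\cdot, \NE_{-k})$. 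Thus $\NE_k$ is a best response in $\game^H$ if and only if it equals this softmax vector, which is precisely the QRE condition for agent $k$, completing the equivalence.
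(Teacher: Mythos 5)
Your proof is correct, and there is an important caveat about the comparison itself: the paper offers no proof of this proposition at all—it is imported verbatim from \cite{melo:qre}—so your argument is judged on its own merits, and it holds up. You correctly reduce the Nash condition in $\game^H$ to a family of per-agent strictly concave programs (using that polymatrix payoffs make $u_k$ linear in $\x_k$, so $u_k^H(\cdot,\NE_{-k})$ is linear plus a strictly concave entropy term), establish that the maximizer is interior, and recover the softmax form from the first-order conditions; strict concavity makes stationarity equivalent to global optimality, which gives both directions of the equivalence simultaneously. One small point of rigor: the phrase ``the gradient points strictly inward'' is loose, since the gradient of $-T_k\langle \x_k,\ln\x_k\rangle$ is not defined where a coordinate vanishes; the clean version of your interiority step is a feasible-direction argument, namely that transferring mass from any positive coordinate (finite marginal loss) to a vanishing coordinate (marginal gain tending to $+\infty$, since the derivative of $-T_k x_{ki}\ln x_{ki}$ diverges as $x_{ki}\to 0^+$) strictly increases the objective, so no boundary point can be optimal. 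With that rephrasing the proof is complete and is, in essence, the standard argument underlying the cited result.
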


\subsection{Learning Model} \label{sec::LearningModel}

 In this work, we analyse the \emph{Q-Learning dynamic}, a prototypical model for determining optimal policies by balancing exploration and exploitation. In this model, each
    agent $k \in \agentset$ maintains a history of the past performance of each of their actions. This history is updated
    via the Q-update
    \begin{equation*}
        Q_{ki}(\tau + 1) = (1 - \alpha_k) Q_{ki}(\tau) + \alpha_k r_{ki}(\x_{-k}(\tau)),
    \end{equation*}
    where $\tau$ denotes the current time step. $Q_{ki}(\tau)$ denotes the \emph{Q-value} maintained by agent $k$ about the
    performance of action $i \in S_k$. In effect $Q_{ki}$ gives a discounted history of the rewards received when $i$ is
    played, with $1 - \alpha_k$ as the discount factor.
    
    Given these Q-values, each agent updates their mixed strategies according to the Boltzmann distribution, given by
    \begin{equation*}
        x_{ki}(\tau) = \frac{\exp(Q_{ki}(\tau)/T_k) }{\sum_j \exp(Q_{kj}(\tau)/T_k)},
    \end{equation*}
    in which $T_k \in [0, \infty)$ is the \emph{exploration rate} of agent $k$.
    
    It was shown in \cite{tuyls:qlearning,sato:qlearning} that a continuous time approximation of the Q-Learning algorithm
    could be written as
    \begin{equation} \tag{QLD} \label{eqn::QLD}
        \frac{\dot{x}_{k i}}{x_{k i}}=r_{k i}\left(\mathbf{x}_{-k}\right)-\langle \mathbf{x}_k, r_k(\mathbf{x}) \rangle +T_k \sum_{j \in S_k} x_{k j} \ln \frac{x_{k j}}{x_{k i}},
    \end{equation}
    which we call the \emph{Q-Learning dynamics} (QLD). The fixed points of this dynamic coincide with the QRE of the game \cite{piliouras:zerosum}.

\subsection{Variational Inequalities and Game Theory}

Our aim in this work is to analyse the Q-Learning dynamics in network games without invoking any particular structure on the payoffs (e.g.~zero-sum). To do this, we employ the \emph{Variational Inequality} approach, which has been successfully applied towards the analysis of network games \cite{melo:network,parise:network,xu:networks-conflict} as well as learning in games \cite{mertikopoulos:finite,sorin:composite,hussain:aamas}. In this paper, we connect these areas of literature.

\begin{definition}[Variational Inequality] 
     Consider a set $\X \subset \R^d$ and a map $F \, : \X \rightarrow \R^d$. The Variational Inequality (VI) problem $VI(\X, F)$ is given as
        \begin{equation}\label{eqn::VIdef}
            \langle \x - \NE, F(\NE) \rangle \geq 0, \hspace{0.5cm} \text{ for all } \x \in \X.
        \end{equation}
    We say that $\NE \in \X$ belongs to the set of solutions to a variational inequality problem $VI(\X, F)$ if it satisfies (\ref{eqn::VIdef}).
\end{definition}

The premise of the variational approach to game theory \cite{facchinei:VI, Rosen1965ExistenceGames} is that the problem of finding equilibria of games can be reformulated as determining the set of solutions to a VI problem. This is done by choosing associating the set $\X$ with $\Delta$ and the map $F$ with the \emph{pseudo-gradient} of the game.

\begin{definition}[Pseudo-Gradient Map]
    The pseudo-gradient map of a game $\game = (\agentset, \edgeset, (u_k, \actionset{k})_{k \in
\agentset})$ is given by
    $F(\x) = (F_k(\x))_{k \in \agentset} = (-D_{\x_k} u_k(\x_k, \x_{-k}))_{k \in \agentset}$.
\end{definition}
The advantage of this formulation is that we can apply results from the study of Variational Inequalities to determine properties of the game. These results rely solely on the form of the pseudo-gradient map and so can generalise results which assume a potential or zero-sum structure of the game \cite{hussain:aamas,kadan:exponential}. 
\begin{lemma}[\cite{melo:qre}]
    Consider a game $\game = (\agentset, \edgeset, (u_k, \actionset{k})_{k \in
    \agentset})$ and for any $T_1, \ldots, T_N > 0$, let $F$ be the pseudo-gradient map of $\game^H$. Then $\NE \in \Delta$ is a QRE of $\game$ if and only if $\NE$ is a solution to $VI(\Delta, F)$.
\end{lemma}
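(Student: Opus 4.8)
The plan is to leverage the Proposition proved just above, which identifies the QRE of $\game$ with the Nash Equilibria of the perturbed game $\game^H$. Given this, it suffices to establish the classical equivalence between Nash Equilibria and variational inequality solutions for the \emph{concave game} $\game^H$: namely that $\NE$ is a Nash Equilibrium of $\game^H$ if and only if $\NE$ solves $VI(\Delta, F)$, where $F = (-D_{\x_k} u_k^H)_{k \in \agentset}$ is its pseudo-gradient.

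First I would verify that $\game^H$ is indeed a concave game. For each agent $k$, the original payoff $u_k(\x_k, \x_{-k}) = \sum_{(k,l) \in \edgeset} \x_k \cdot A^{kl} \x_l$ is linear, hence concave, in the own-strategy $\x_k$; the regularizing term $-T_k \langle \x_k, \ln \x_k \rangle$ is the scaled Shannon entropy, which is strictly concave on $\Delta(\actionset{k})$ whenever $T_k > 0$. Thus $u_k^H(\cdot, \x_{-k})$ is strictly concave and continuously differentiable on the relative interior of $\Delta(\actionset{k})$, with $D_{\x_k} u_k^H(\x) = r_k(\x) - T_k(\ln \x_k + \mathbf{1})$.

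The two directions then follow from standard first-order reasoning. For the direction $VI \Rightarrow NE$, suppose $\NE$ solves $VI(\Delta, F)$ and fix an agent $k$; choosing the test point $\x \in \Delta$ to agree with $\NE$ in every coordinate except $k$ reduces the VI inequality to $\langle \x_k - \NE_k, D_{\x_k} u_k^H(\NE) \rangle \leq 0$ for all $\x_k \in \Delta(\actionset{k})$. Concavity of $u_k^H(\cdot, \NE_{-k})$ together with the gradient inequality then gives $u_k^H(\x_k, \NE_{-k}) \leq u_k^H(\NE_k, \NE_{-k})$, so $\NE_k$ is a best response and $\NE$ is a Nash Equilibrium. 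Conversely, if $\NE$ is a Nash Equilibrium then each $\NE_k$ maximizes the concave map $u_k^H(\cdot, \NE_{-k})$ over the simplex, so first-order optimality over a convex set yields $\langle \x_k - \NE_k, D_{\x_k} u_k^H(\NE) \rangle \leq 0$ for every $\x_k$; summing these inequalities over $k \in \agentset$ recovers $\langle \x - \NE, F(\NE) \rangle \geq 0$ for all $\x \in \Delta$, i.e.\ $\NE$ solves $VI(\Delta, F)$.

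I expect the main obstacle to be the boundary behaviour of the entropic regularizer rather than the algebra above. Since $D_{\x_k} u_k^H$ contains the term $-T_k(\ln \x_k + \mathbf{1})$, the pseudo-gradient $F$ blows up as any coordinate of $\x$ tends to $0$, so it is not continuous --- indeed not finite --- on the boundary of $\Delta$, and the first-order manipulations are only valid in the relative interior. The careful part is therefore to argue that every Nash Equilibrium of $\game^H$, and every solution of $VI(\Delta, F)$, necessarily lies in the interior: this follows because the inward-pointing gradient of the entropy diverges to $+\infty$ near the boundary, so no boundary point can satisfy the optimality conditions. Once interiority is secured the derivations hold verbatim. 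As an alternative that sidesteps the abstract concave-game machinery, one can instead write the Karush--Kuhn--Tucker stationarity conditions for the interior maximization of $u_k^H(\cdot, \NE_{-k})$ directly; solving them gives $\NE_{ki} \propto \exp(r_{ki}(\NE_{-k})/T_k)$, which after normalization is exactly the Boltzmann form defining the QRE, yielding the equivalence in one stroke.
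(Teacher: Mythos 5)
The paper offers no proof of this lemma at all --- it is imported verbatim from \cite{melo:qre} --- so there is no internal argument to compare yours against; the question is only whether your reconstruction is sound, and it is. Your route is the standard one: combine the preceding Proposition (QRE of $\game$ $\Leftrightarrow$ Nash equilibrium of $\game^H$) with the classical Rosen-type equivalence between Nash equilibria of concave games and solutions of the associated variational inequality, which applies because each $u_k^H(\cdot, \x_{-k})$ is strictly concave (linear payoff plus scaled entropy) on the compact convex set $\Delta(\actionset{k})$. Both first-order directions are carried out correctly, and you are right that the only genuine subtlety is boundary behaviour: the pseudo-gradient of $\game^H$ diverges as any coordinate of $\x$ tends to zero, but the infinite inward directional derivative of the entropy term forces every best response of $\game^H$ --- hence every Nash equilibrium, and every point at which the VI can be satisfied --- into the relative interior of $\Delta$, where the gradient manipulations are legitimate; equivalently, the Boltzmann form of the QRE guarantees interiority on that side directly. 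Your alternative KKT derivation, recovering $\NE_{ki} \propto \exp(r_{ki}(\NE_{-k})/T_k)$ from interior stationarity, is essentially how \cite{melo:qre} itself establishes the correspondence, so either of your two routes would serve as a complete proof.
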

With this correspondence in place, we can analyse properties of the pseudo-gradient map and its relation to properties of the game and the learning dynamic. One important property is \emph{monotonicity}.
\begin{definition}
    A map $F \, : \X \rightarrow \R^d$ is
    \begin{enumerate}
        \item \emph{Monotone} if, for all $\x, \y \in \X$,
        \begin{equation*}
            \langle F(\x) - F(\y), \x - \y \rangle \geq 0.
        \end{equation*}
        \item \emph{Strongly Monotone} with constant $\alpha > 0$ if, for all $\x, \y \in \X$,
        \begin{equation*}
            \langle F(\x) - F(\y), \x - \y \rangle \geq \alpha ||\x - \y||^2_2.
        \end{equation*}
    \end{enumerate}
\end{definition}
\begin{definition}[Monotone Game]
    A game $\game$ is \emph{monotone} if its pseudo-gradient map is monotone.
\end{definition}
A large part of our analysis will be in determining conditions under which the pseudo-gradient map is monotone. Upon doing so, we are able to employ the following results.
\begin{lemma}[\cite{melo:qre}] \label{lem::unique-qre}
    Consider a game $\game = (\agentset, \edgeset, (u_k, \actionset{k})_{k \in
    \agentset})$ and for any $T_1, \ldots, T_N > 0$, let $F$ be the pseudo-gradient map of $\game^H$. $\game$ has a unique QRE $\NE \in \Delta$ if $F$ is strongly monotone with any $\alpha>0$.
\end{lemma}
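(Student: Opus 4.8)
The plan is to reduce the statement to the uniqueness of solutions of the variational inequality $VI(\Delta, F)$ and then exploit strong monotonicity through the standard two-point argument. By the correspondence established in the preceding lemma, $\NE \in \Delta$ is a QRE of $\game$ if and only if it solves $VI(\Delta, F)$, where $F$ is the pseudo-gradient of $\game^H$. Hence it suffices to show that this VI admits exactly one solution, and I would split the work into existence and uniqueness.

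For uniqueness I would argue by contradiction: suppose $\NE$ and $\NE'$ are both solutions of $VI(\Delta, F)$. Applying the defining inequality (\ref{eqn::VIdef}) at $\NE$ with test point $\x = \NE'$ gives $\langle \NE' - \NE, F(\NE) \rangle \geq 0$, and symmetrically at $\NE'$ with test point $\x = \NE$ gives $\langle \NE - \NE', F(\NE') \rangle \geq 0$. Adding the two inequalities and rearranging yields $\langle F(\NE') - F(\NE), \NE' - \NE \rangle \leq 0$. On the other hand, strong monotonicity of $F$ with constant $\alpha > 0$ gives $\langle F(\NE') - F(\NE), \NE' - \NE \rangle \geq \alpha ||\NE' - \NE||_2^2$. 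Combining the two bounds forces $\alpha ||\NE' - \NE||_2^2 \leq 0$, and since $\alpha > 0$ this is only possible when $\NE' = \NE$. This step is routine and uses only strict monotonicity, so strong monotonicity is more than enough.

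For existence I would invoke the fact that a QRE always exists whenever the exploration rates are strictly positive, independently of monotonicity. The quantal response map $\Phi \colon \Delta \to \Delta$ defined coordinatewise by $\Phi(\x)_{ki} = \exp(r_{ki}(\x_{-k})/T_k) / \sum_{j} \exp(r_{kj}(\x_{-k})/T_k)$ is continuous on the compact convex set $\Delta$, and by definition its fixed points are precisely the QRE. Brouwer's fixed point theorem then guarantees at least one QRE, which combined with the uniqueness argument above completes the proof.

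The main obstacle to address is the boundary behaviour of $F$. Since $u_k^H$ contains the entropy term $-T_k\langle \x_k, \ln \x_k\rangle$, the pseudo-gradient satisfies $F_k(\x) = -r_k(\x) + T_k(\ln \x_k + \mathbf{1})$, which diverges as any coordinate $x_{ki} \to 0$; thus $F$ is only defined on the relative interior of $\Delta$. The point I would need to verify is that every QRE lies in this interior, which is immediate because the quantal response assigns strictly positive probability to each action. Consequently the monotonicity inequality is only ever applied at interior points where $F$ is finite, so both the existence argument and the two-point uniqueness argument are fully justified. This boundary check is the only place where genuine care is required; the algebra of the uniqueness step itself is entirely mechanical.
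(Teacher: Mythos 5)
Your proof is correct. The paper states this lemma as an imported result from \cite{melo:qre} and gives no proof of its own; your reconstruction --- the equivalence of QRE with solutions of $VI(\Delta, F)$, existence via Brouwer's theorem applied to the continuous logit response map, and uniqueness via the standard two-point strong-monotonicity argument, together with the correct observation that every QRE (hence every point at which $F$ is evaluated) lies in the relative interior of $\Delta$ where the entropy gradient is finite --- is precisely the standard argument underlying the cited result and the framework the paper sets up in its preliminaries.
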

\begin{lemma}[\cite{hussain:aamas}] \label{lem::ql-conv}
    If the game $G$ is \emph{monotone}, then the Q-Learning Dynamics (\ref{eqn::QLD}) converge to the unique QRE with any positive exploration rates $T_1, \ldots, T_N > 0$.
\end{lemma}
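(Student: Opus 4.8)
The plan is to recognise the Q-Learning dynamics as the replicator dynamics of the entropy-regularised game $\game^H$, and then to exhibit a relative-entropy Lyapunov function whose decrease is controlled by the monotonicity of the pseudo-gradient map. Throughout I let $F$ denote the pseudo-gradient of $\game^H$, as in the preceding lemmas, and write $F^0$ for the (monotone, by hypothesis) pseudo-gradient of $\game$.

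First I would perform the key algebraic rewriting. Using $\sum_j x_{kj}=1$ one has $T_k\sum_j x_{kj}\ln(x_{kj}/x_{ki}) = T_k\sum_j x_{kj}\ln x_{kj} - T_k\ln x_{ki}$, while the reward in the perturbed game is $r^H_{ki}(\x)=r_{ki}(\x_{-k})-T_k(\ln x_{ki}+1)$. Substituting both into (\ref{eqn::QLD}), I expect the exploration contributions to collapse exactly, leaving $\dot{x}_{ki}/x_{ki}=r^H_{ki}(\x)-\langle\x_k, r^H_k(\x)\rangle$; that is, (\ref{eqn::QLD}) is precisely the replicator dynamics driven by the payoffs of $\game^H$, and since $F=-r^H$ this identifies the bracket with $-F$. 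Next I would show $F$ is strongly monotone: writing $F(\x)=F^0(\x)+(T_k(\ln\x_k+\mathbf{1}))_k$, the cross term $\sum_k T_k\langle\ln\x_k-\ln\y_k,\x_k-\y_k\rangle$ is bounded below using that the Hessian of the negative entropy is $\mathrm{diag}(1/x_{ki})\succeq I$ on the simplex (each $x_{ki}\le 1$), giving $\langle\ln\x_k-\ln\y_k,\x_k-\y_k\rangle\ge\|\x_k-\y_k\|_2^2$. Combined with monotonicity of $F^0$, this yields strong monotonicity of $F$ with constant $\alpha=\min_k T_k>0$, so by \Cref{lem::unique-qre} there is a unique QRE $\NE$, which is interior owing to the softmax form.

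Then I would take the Lyapunov function $H(\x)=\sum_k\sum_i\bar{x}_{ki}\ln(\bar{x}_{ki}/x_{ki})$, the KL divergence from $\NE$, which is nonnegative, vanishes only at $\NE$, is smooth on the relative interior, and blows up at the boundary so that its sublevel sets are compact interior subsets. Differentiating along the flow and using the replicator form gives $\dot H=-\sum_k\langle\NE_k-\x_k,r^H_k(\x)\rangle=\langle\NE-\x,F(\x)\rangle$. Invoking the VI characterisation $\langle\x-\NE,F(\NE)\rangle\ge 0$ together with strong monotonicity yields $\langle\x-\NE,F(\x)\rangle\ge\alpha\|\x-\NE\|_2^2$, hence $\dot H\le-\alpha\|\x-\NE\|_2^2\le 0$, strict away from $\NE$. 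A standard Lyapunov/LaSalle argument then forces convergence to $\NE$.

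The main obstacle I anticipate is not the monotonicity bookkeeping but the boundary behaviour. One must verify that the relative interior of $\Delta$ is forward invariant so that $H$ remains finite along trajectories, and that the softmax structure keeps $\NE$ strictly interior; the relative-entropy Lyapunov function is exactly what tames this, since its sublevel sets are compact subsets of the interior. Some care is needed to justify differentiating $H$ and applying LaSalle on the noncompact interior, and to confirm that the collapse of the exploration terms in the first step is genuinely exact rather than leaving an $i$-dependent residual.
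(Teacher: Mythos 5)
Your proof is correct. Note that this lemma is not proved in the paper at all — it is imported as a black box from \cite{hussain:aamas} and used only to finish the proof of Theorem \ref{thm::main-thm} — so the relevant comparison is with that reference, whose argument yours essentially reproduces: the exploration terms in (\ref{eqn::QLD}) collapse exactly so that QLD is the replicator dynamic of the perturbed game $\game^H$, monotonicity of $\game$ plus strong convexity of the entropy term (the paper's Proposition \ref{prop::strong-conv}, which it likewise uses to pass from $\game$ to $\game^H$ in the appendix) gives strong monotonicity of the perturbed pseudo-gradient and hence a unique interior QRE via Lemma \ref{lem::unique-qre}, and the KL divergence to that QRE is a strict Lyapunov function whose boundary blow-up keeps trajectories in compact interior sublevel sets. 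The two concerns you flag (exactness of the collapse, forward invariance of the interior) both check out, so there is no gap.
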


\section{Convergence of Q-Learning in Network Games}

In this section we determine a sufficient condition under which Q-Learning converges to a unique QRE, which is given in terms of the exploration rate and the network game structure. To do this, we determine a sufficient condition on exploration rates $T_k$ such that the perturbed game $\game^H$ is strongly monotone. We find that this condition is dependent on the strength of pairwise interactions in the network, as well as its structure. We then compare our result to that of \cite{hussain:aamas} and show that, under suitable network structures, stability can be achieved with comparatively low exploration rates, even in the presence of many players. This also refines the result of \cite{sanders:chaos} which suggests that learning dynamics are increasingly unstable as the number of players increases, regardless of exploration rate.

To achieve our main result, we first parameterise pairwise interactions in a network game as follows.

\begin{definition}[Interaction Coefficient]
    Let $\game = (\agentset, \edgeset, (u_k, \actionset{k})_{k \in
    \agentset})$ be a network game whose edgeset is associated with the payoff functions $(A^{kl}, A^{lk})_{(k, l) \in \edgeset}$. Then, the \emph{interaction coefficient} $\delta_S$ of $\game$ is given as
    \begin{equation}
        \delta_S = \max_{(k, l) \in \edgeset} \lVert A^{kl} + (A^{lk})^\top \rVert_2,
    \end{equation}
    where $\lVert M \rVert_2 = \sup_{||\x||_2 = 1} \lVert M\x \rVert_2$ denotes the operator $2$-norm \cite{meiss:book}.
\end{definition}

\begin{theorem} \label{thm::main-thm}
    Consider a network game $\game = (\agentset, \edgeset, (u_k, \actionset{k})_{k \in
    \agentset})$ which has interaction coefficient $\delta_S$ and adjacency matrix $G$.
    The Q-Learning Dynamic converges to a unique QRE $\NE \in \Delta$ if, for all agents $k \in \agentset$, 
    \begin{equation} \label{eqn::main-cond}
        T_k > \frac{1}{2} \delta_S \left\lVert G \right\rVert_\infty,
    \end{equation}
    where $\lVert M \rVert_\infty = \max_i \sum_{j} |[G_{ij}]|$ is the operator $\infty$-norm.
    % \fb{why introducing the 2-norm in the definition and then using the $\infty$=norm in the result?}
\end{theorem}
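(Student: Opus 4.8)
The plan is to reduce the whole statement to strong monotonicity of the pseudo-gradient map $F$ of the perturbed game $\game^H$. Once I establish that $F$ is strongly monotone with some constant $\alpha > 0$ whenever (\ref{eqn::main-cond}) holds, Lemma~\ref{lem::unique-qre} gives a unique QRE, and since strong monotonicity implies monotonicity, Lemma~\ref{lem::ql-conv} delivers convergence of the Q-Learning dynamics to that QRE. So the entire argument is an estimate of the monotonicity gap $\langle F(\x) - F(\y), \x - \y \rangle$ from below.

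First I would compute $F$ explicitly. Differentiating $u_k^H$ yields $F_k(\x) = -\sum_{l : (k,l)\in\edgeset} A^{kl}\x_l + T_k(\ln \x_k + \mathbf{1})$, which splits $F$ into a bilinear \emph{game} part and an \emph{entropy} part. Writing $\mathbf{z}_k = \x_k - \y_k$, the constant $\mathbf{1}$ cancels in the difference, so the gap decomposes as
\[
\langle F(\x)-F(\y),\x-\y\rangle = \sum_{k} T_k\langle \ln \x_k - \ln \y_k,\, \mathbf{z}_k\rangle \;-\; \sum_{k}\sum_{l:(k,l)\in\edgeset} \mathbf{z}_k^\top A^{kl}\mathbf{z}_l .
\]
For the entropy part I would invoke that the negative entropy $\langle \x_k, \ln \x_k\rangle$ is $1$-strongly convex on the simplex with respect to $\lVert\cdot\rVert_1$, which gives $\langle \ln \x_k - \ln \y_k, \mathbf{z}_k\rangle \geq \lVert \mathbf{z}_k\rVert_1^2 \geq \lVert \mathbf{z}_k\rVert_2^2$. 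This is the sole source of positivity, contributing at least $\sum_k T_k \lVert \mathbf{z}_k\rVert_2^2$.

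The crux — and the step I expect to be the main obstacle — is controlling the indefinite game term. I would first symmetrize over edges: since $G$ is symmetric, pairing the ordered contributions of $(k,l)$ and $(l,k)$ turns the double sum into $\sum_{\{k,l\}} \mathbf{z}_k^\top S^{kl}\mathbf{z}_l$ with $S^{kl} = A^{kl} + (A^{lk})^\top$, so that $\lVert S^{kl}\rVert_2 \leq \delta_S$ directly by the definition of the interaction coefficient. The operator-norm bound then gives $|\mathbf{z}_k^\top S^{kl}\mathbf{z}_l| \leq \delta_S \lVert \mathbf{z}_k\rVert_2 \lVert \mathbf{z}_l\rVert_2$. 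Collecting the weights $w_k := \lVert \mathbf{z}_k\rVert_2$, the entire game term is bounded in absolute value by $\tfrac12 \delta_S\, w^\top G w$. The decisive remaining estimate is to pass from this quadratic form to $\lVert G\rVert_\infty$: applying $w_k w_l \leq \tfrac12(w_k^2 + w_l^2)$ and using symmetry of $G$ yields $w^\top G w \leq \sum_k d_k w_k^2 \leq \lVert G\rVert_\infty \lVert w\rVert_2^2$, where $d_k$ is the degree of agent $k$ and $\lVert G\rVert_\infty$ is exactly the maximum degree. This is precisely the point at which the bound becomes independent of the total number of agents $N$ and depends only on local connectivity, which is the conceptual heart of the theorem.

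Combining the two estimates gives $\langle F(\x)-F(\y),\x-\y\rangle \geq \bigl(\min_k T_k - \tfrac12 \delta_S \lVert G\rVert_\infty\bigr)\lVert \x - \y\rVert_2^2$, so $F$ is strongly monotone with $\alpha = \min_k T_k - \tfrac12 \delta_S \lVert G\rVert_\infty$, which is strictly positive exactly under condition (\ref{eqn::main-cond}). Applying Lemma~\ref{lem::unique-qre} and Lemma~\ref{lem::ql-conv} then closes the argument. The only subtlety to handle carefully is bookkeeping the factor $\tfrac12$ from the undirected edge double-count, which must line up with the constant in (\ref{eqn::main-cond}); the symmetrization and the AM-GM step are what make this factor consistent.
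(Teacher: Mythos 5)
Your proposal is correct, but it reaches the result by a genuinely different route than the paper. The paper works at second order: it forms the pseudo-Hessian $J(\x) = D(\x) + N(\x)$ of the perturbed game $\game^H$, symmetrizes it, and applies Weyl's inequality together with a spectral-radius bound and a dedicated block-matrix norm lemma (Lemma~\ref{lem::two-norm}, giving $\lVert N \rVert_2 \le \sqrt{\lVert G\rVert_1 \lVert G\rVert_\infty}\max_{k,l}\lVert A^{kl}\rVert_2$) to conclude $\lambda_{\min}(\bar{J}) \ge \min_k T_k - \tfrac{1}{2}\delta_S \lVert G\rVert_\infty$, from which strong monotonicity follows. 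You instead bound the monotonicity gap $\langle F(\x)-F(\y), \x-\y\rangle$ directly at first order: Pinsker-type $1$-strong convexity of negative entropy in $\ell_1$ supplies the positive term $\sum_k T_k \lVert \x_k - \y_k\rVert_2^2$, while the edge-wise symmetrization $S^{kl} = A^{kl} + (A^{lk})^\top$ followed by the operator-norm bound, AM-GM, and degree counting controls the game term by $\tfrac{1}{2}\delta_S \lVert G\rVert_\infty \lVert \x-\y\rVert_2^2$. Notably, your AM-GM-plus-degree-count step is exactly the engine inside the paper's proof of Lemma~\ref{lem::two-norm}, so the quantitative heart of the two arguments coincides; what differs is the wrapper around it. Your route is more elementary and self-contained: it avoids Weyl's inequality, spectral radii, and the tacitly used fact that positive definiteness of the symmetrized Jacobian implies monotonicity (which requires differentiability of $F$ on a convex domain). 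The paper's spectral route is more modular: the block-norm lemma is a reusable standalone statement, and the eigenvalue framework makes the $\lVert G\rVert_2$ refinement conjectured in the experiments a natural next step. Both proofs share the same unstated caveats, so neither counts against you: $F$ involves $\ln \x_k$ and is only defined on the relative interior of $\Delta$, and both arguments apply Lemma~\ref{lem::ql-conv} (stated for a monotone game $\game$) to the perturbed game $\game^H$, a reading the paper itself adopts.
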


We defer the full proof of Theorem \ref{thm::main-thm} to the Appendix and illustrate the main ideas here. In order to apply Lemma \ref{lem::ql-conv}, we must show that under (\ref{eqn::main-cond}), the perturbed game $\game^H$ is monotone. To do this, we decompose $\game^H$ into a term which is solely parameterised by exploration rates, and another term corresponding to the payoff matrices and graph structure. We then show that the second term can be decomposed as $\frac{1}{2}\delta_S\left\lVert G \right\rVert_\infty$, which allows us to separate terms involving the payoffs and the graph structure. We use the fact that the transformation between a $\game$ and $\game^H$ is given by $T_k \langle \x_k, \ln \x_{k} \rangle$, which has a strongly monotone gradient \cite{melo:qre} with constant $T_k$. Then, if the exploration rates are high enough to offset $\frac{1}{2}\delta_S \left\lVert G \right\rVert_\infty$, the resulting pseudo-gradient is monotone, and Lemma \ref{lem::ql-conv} can be applied.

The condition of Theorem \ref{thm::main-thm} for the convergence asserts that Q-Learning dynamics is convergent in a network game given sufficient exploration. In a similar light to the result of \cite{hussain:aamas}, the amount of exploration required depends on the strength of interaction. The main difference is that the condition includes a term $\left\lVert G \right\rVert_\infty$ which encodes the network structure. This term has a natural interpretation as follows. Let $\agentset_k = \{l \in \agentset : (k, l) \in \edgeset\}$ be the \emph{neighbours} of agent $k$, i.e. all the agents who interact with agent $k$ according to the network. Then $\lVert G \rVert_\infty = \max_k |\agentset_k|$, which denotes the maximum number of neighbours across all agents. 
\begin{figure*}[t]
\captionsetup{justification=centering}
	\centering
	\begin{subfigure}[b]{0.3\textwidth}
		\centering
		\includegraphics[width=0.8\textwidth]{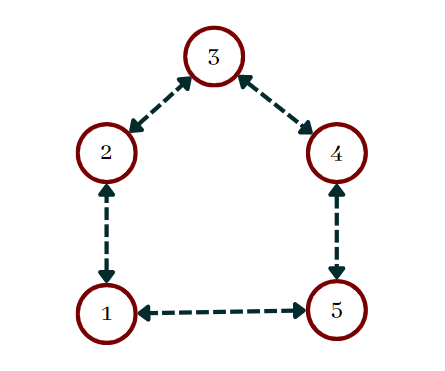}
		\caption*{Ring Network \\ $\left\lVert G \right\rVert_\infty = 2, \left\lVert G \right\rVert_2 = 2$}
	\end{subfigure}
	\begin{subfigure}[b]{0.3\textwidth}
		\centering
		\includegraphics[width=0.8\textwidth]{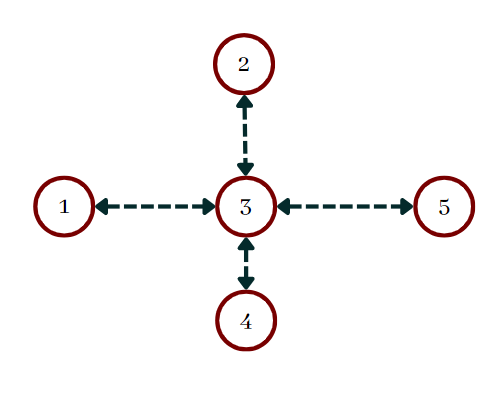}
		\caption*{Star Network \\ $\left\lVert G \right\rVert_\infty = N-1, \left\lVert G \right\rVert_2 = \sqrt{N - 1}$}
	\end{subfigure}
	\begin{subfigure}[b]{0.3\textwidth}
		\centering
		\includegraphics[width=0.8\textwidth]{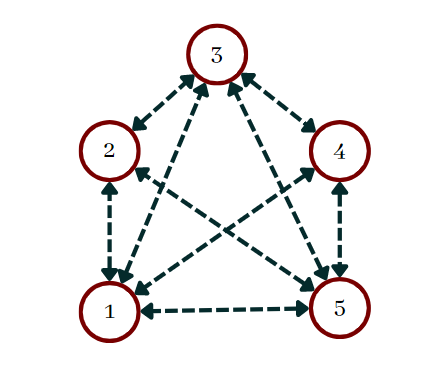}
		\caption*{Fully Connected Network \\ $\left\lVert G \right\rVert_\infty = N-1, \left\lVert G \right\rVert_2 = N - 1$}
	\end{subfigure}

 \caption{Examples of networks with five agents and associated $\left\lVert G \right\rVert_\infty$ and $\left\lVert G \right\rVert_2$. }\label{fig::example-networks}
\end{figure*}

\begin{figure*}[t]
	\centering
	\begin{subfigure}[b]{0.45\textwidth}
		\centering
		\includegraphics[width=1\textwidth]{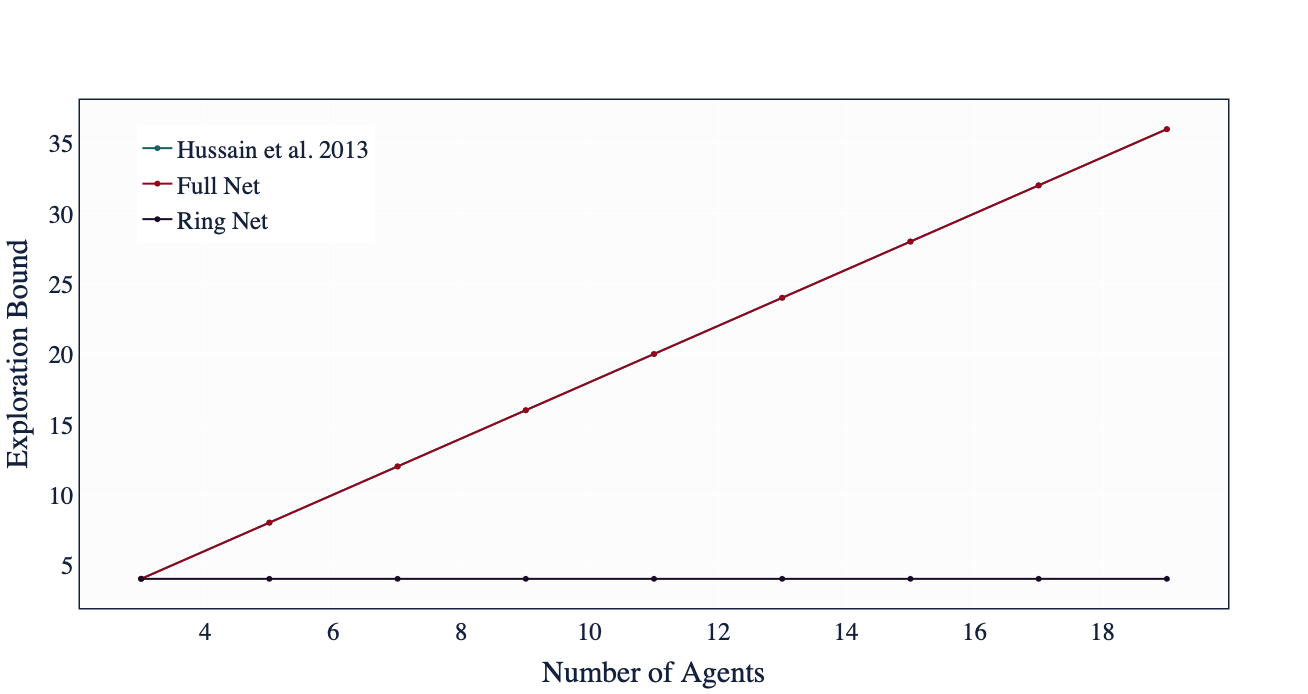}
		\caption*{Mismatching Game \\ $M = 2$\\ $\delta_S = 2$}
	\end{subfigure}
	\begin{subfigure}[b]{0.45\textwidth}
		\centering
		\includegraphics[width=1\textwidth]{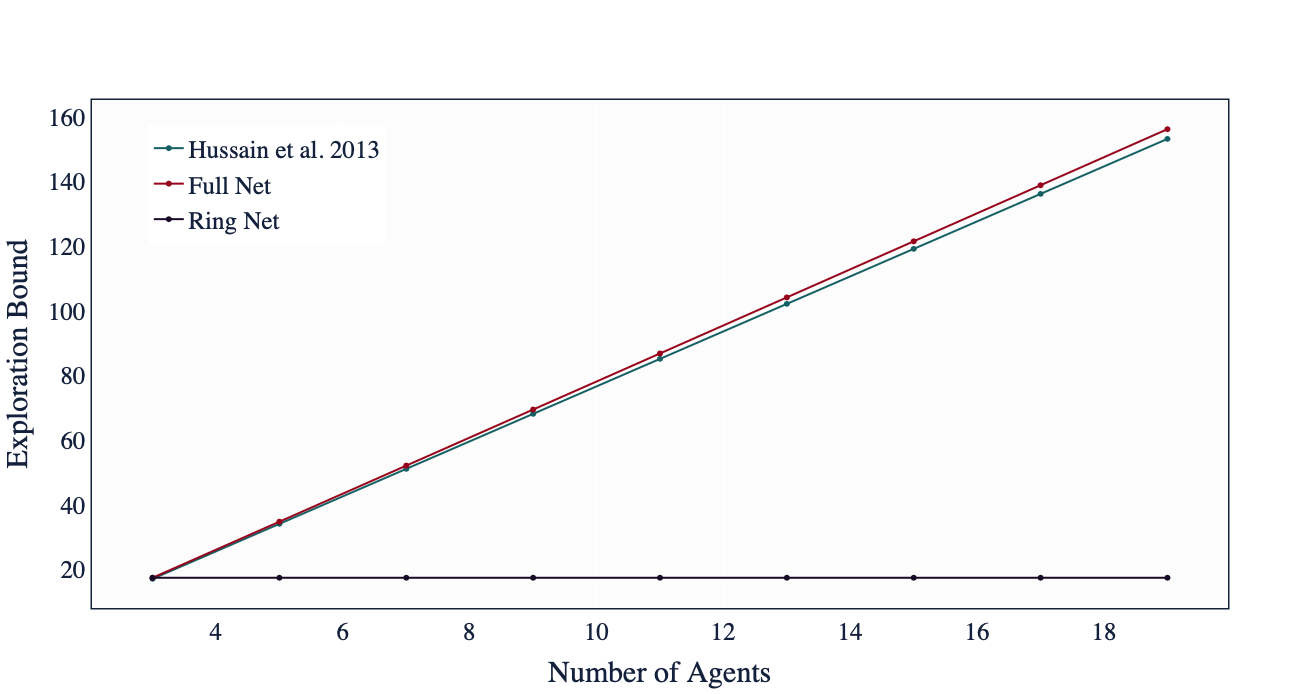}
		\caption*{Chakraborty Game \\ $\alpha = 7, \beta = 8.5$\\ $\delta_S \approx 8.67$}
	\end{subfigure}
	\begin{subfigure}[b]{0.45\textwidth}
		\centering
		\includegraphics[width=1\textwidth]{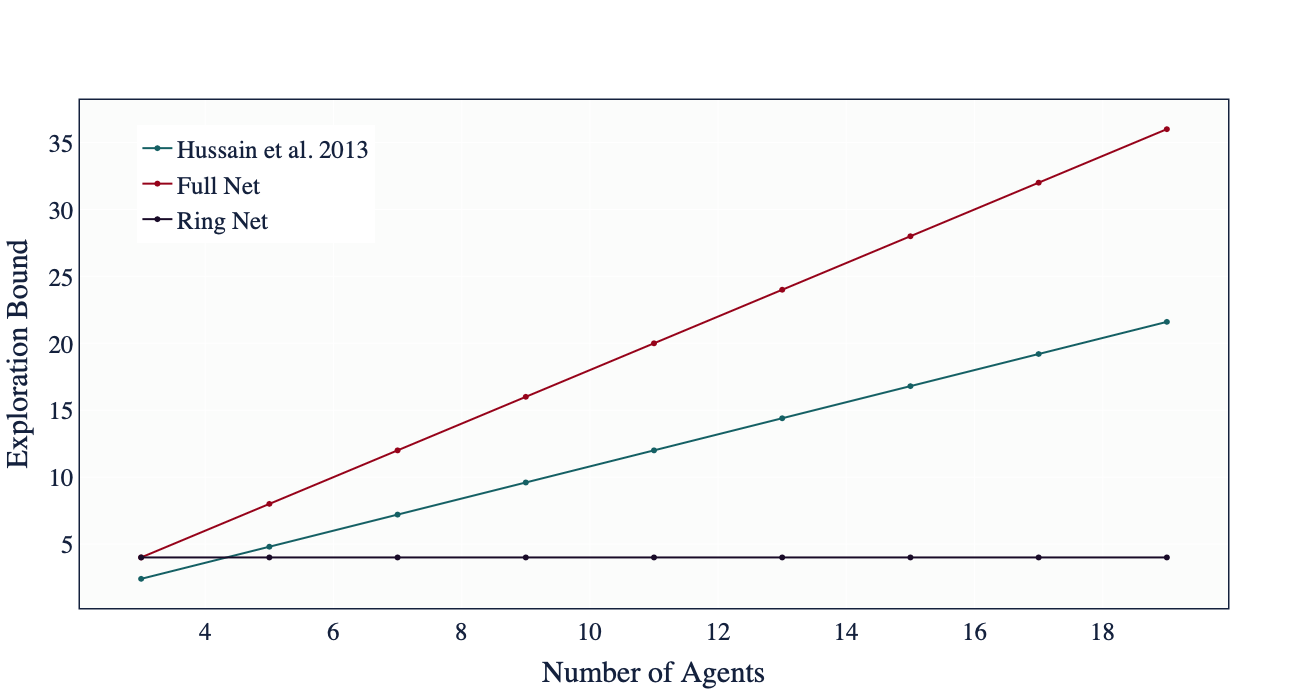}
		\caption*{Shapley Game \\ $\beta = 0.2$ \\ $\delta_S = 2$}
	\end{subfigure}
	\begin{subfigure}[b]{0.45\textwidth}
		\centering
		\includegraphics[width=1\textwidth]{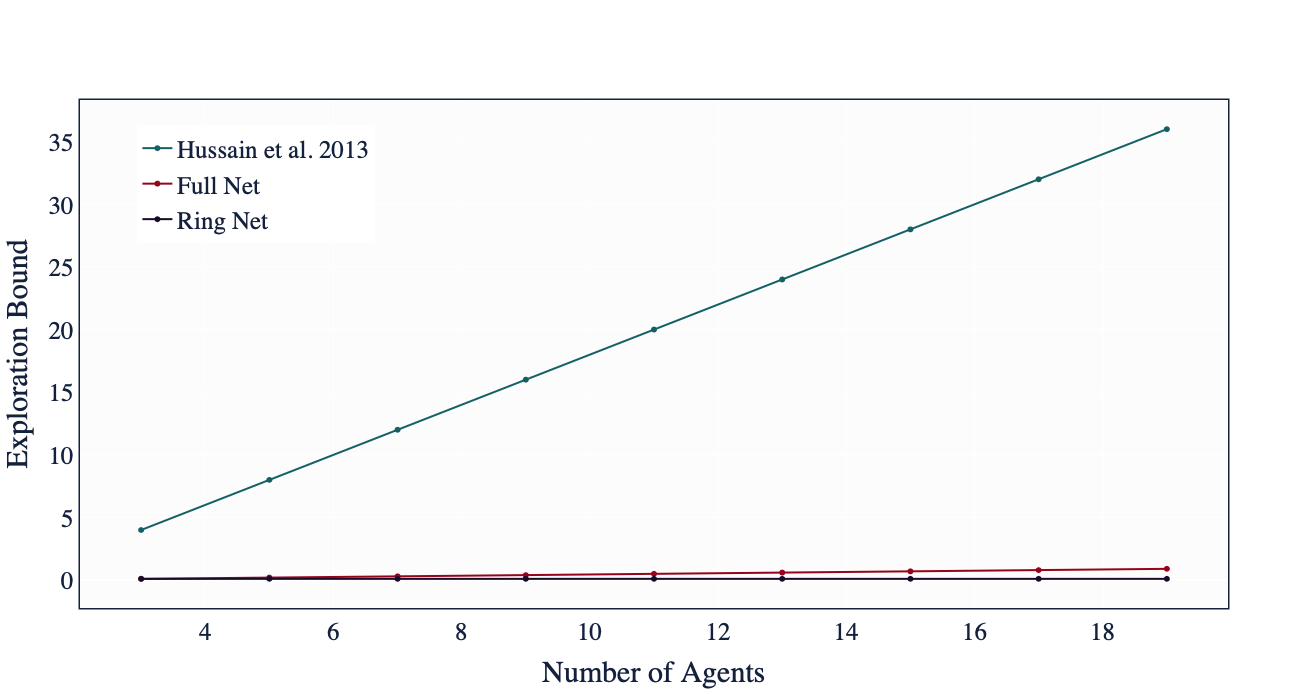}
		\caption*{Sato Game \\ $\epsX = 0.1, \epsY=-0.05$\\ $\delta_S = 0.05$}
	\end{subfigure}
 \caption{Lower Bound on sufficient exploration as defined by \cite{hussain:aamas} and by (\ref{eqn::main-cond}) in a fully connected network and ring network. The star network is not plotted as it has the same $\infty$-norm as the fully connected network. These are repeated for various games who are defined in Section \ref{sec::experiments}.}\label{fig::stability-boundary}
\end{figure*}
A useful point about (\ref{eqn::main-cond}) is that it does not make any assumptions regarding the nature of the interaction between games, but rather parameterises pairwise interactions by $\delta_S$. As such, the result is not limited to restrictive settings such as \emph{network zero sum games} \cite{piliouras:zerosum}. In fact, the convergence of Q-Learning dynamics in pairwise zero-sum games follows immediately from Theorem \ref{thm::main-thm}.

\begin{corollary}
    If the network game $\game$ is a pairwise zero-sum matrix, i.e., $A^{kl} + (A^{lk})^\top = 0$ for all $(k, l) \in \edgeset$, then the Q-Learning dynamics converge to a unique QRE so long as exploration rates $T_k$ for all agents are strictly positive.
\end{corollary}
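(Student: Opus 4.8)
The plan is to obtain the Corollary as an immediate specialisation of Theorem~\ref{thm::main-thm}, by showing that the pairwise zero-sum hypothesis forces the interaction coefficient $\delta_S$ to vanish. First I would examine the defining quantity of $\delta_S$ edge by edge: for every $(k,l) \in \edgeset$ the hypothesis $A^{kl} + (A^{lk})^\top = \zeros$ gives $\lVert A^{kl} + (A^{lk})^\top \rVert_2 = \lVert \zeros \rVert_2 = 0$. Since $\delta_S$ is the maximum of these operator norms over the edge set, it follows that $\delta_S = 0$.

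Next I would substitute this into condition~(\ref{eqn::main-cond}). The right-hand side becomes $\tfrac{1}{2}\,\delta_S\,\lVert G \rVert_\infty = 0$, \emph{independently} of the network structure $G$ and of the number of agents $N$. Hence the convergence condition of Theorem~\ref{thm::main-thm} collapses to $T_k > 0$ for every $k \in \agentset$, which is precisely the hypothesis of the Corollary. Invoking Theorem~\ref{thm::main-thm} then yields both uniqueness of the QRE $\NE \in \Delta$ and convergence of the Q-Learning dynamics, completing the argument.

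For intuition, and as a cross-check, I would note why the bound degenerates so cleanly. The proof sketch of Theorem~\ref{thm::main-thm} decomposes the pseudo-gradient of $\game^H$ into a payoff/graph term controlled by $\tfrac{1}{2}\delta_S\lVert G\rVert_\infty$ and an entropic term contributing strong monotonicity of constant $T_k$ per agent. When $\delta_S = 0$ the payoff coupling across each edge is skew-symmetric, so the payoff part of the pseudo-gradient is already monotone; the strictly positive exploration rates then render $\game^H$ strongly monotone, and Lemmas~\ref{lem::unique-qre} and~\ref{lem::ql-conv} apply. There is essentially no obstacle here: the only thing to verify is the one-line norm computation $\lVert \zeros \rVert_2 = 0$, and the substantive content of the Corollary is really the observation that Theorem~\ref{thm::main-thm} specialises, with a network- and player-count-independent bound, to recover convergence in pairwise zero-sum network games under arbitrarily small positive temperatures.
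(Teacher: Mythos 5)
Your proposal is correct and matches the paper's own (implicit) argument exactly: the paper derives the Corollary as an immediate consequence of Theorem~\ref{thm::main-thm}, since the pairwise zero-sum hypothesis forces $\delta_S = 0$ and hence collapses condition~(\ref{eqn::main-cond}) to $T_k > 0$. Your one-line computation $\lVert A^{kl} + (A^{lk})^\top \rVert_2 = 0$ for every edge, followed by substitution into the bound, is precisely the intended reasoning.
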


\begin{remark}
    Corollary 1 is supported by the result of \cite{piliouras:zerosum,hussain:aamas} in which it was shown that Q-Learning converges to a unique QRE in all network zero-sum games (even if they are not pairwise zero-sum) so long as all exploration rates $T_k$ are positive.
\end{remark}

% \begin{remark}
%     Whilst we have assumed that the underlying graph is undirected, it is straightforward to make any connection $(k, l) \in \edgeset$ directed by choosing either $A^{kl}$ or $A^{lk}$ to be $\zeros$ so that, for example, $l$ responds to $k$ but $k$'s payoff is unaffected by $l$. Therefore, our results extend easily towards directed graphs and we leverage this fact in analysing the Network Chakraborty Game in Section \ref{sec::experiments}.
% \end{remark}

\paragraph{Discussion} The main takeaway from Theorem \ref{thm::main-thm} is that the condition on sufficient exploration depends on $\left\lVert G \right\rVert_\infty$, which is a measure of the network structure. In certain networks, such as the ring network depicted in Figure \ref{fig::example-networks}a, $\left\lVert G \right\rVert_\infty$ is independent of the number of agents in the system. Therefore, as the number of agents increase, the bound (\ref{eqn::main-cond}) does not increase. By contrast, in the fully connected network all agents are connected to each other and so $\left\lVert G \right\rVert_\infty$ increases with the number of agents. This illustrates the main point that the variation in the stability boundary defined by (\ref{eqn::main-cond}) depends on the structure of the network rather than solely on the total number of agents as previously found by \cite{hussain:aamas,sanders:chaos}. We illustrate this further in Figure \ref{fig::stability-boundary} which plots the stability boundary defined in \cite{hussain:aamas} in various games (which we define in Section \ref{sec::experiments}) as well as (\ref{eqn::main-cond}) for the ring and fully-connected network. Here, it is clear that (\ref{eqn::main-cond}) is a tighter bound than that of \cite{hussain:aamas} particularly for the ring network in all games. The advantage of using (\ref{eqn::main-cond}) is most clear in the example of the Sato game which, in \cite{sato:rps} was shown to display chaotic behaviour in the two-agent case when exploration rates are uniformly zero. In Figure \ref{fig::stability-boundary} it can be seen that only a small amount of exploration is required to stabilise the system.

\section{Experiments} \label{sec::experiments}
\begin{figure*}[t]
	\centering
	\begin{subfigure}[b]{0.225\textwidth}
		\centering
		\includegraphics[width=0.9\textwidth]{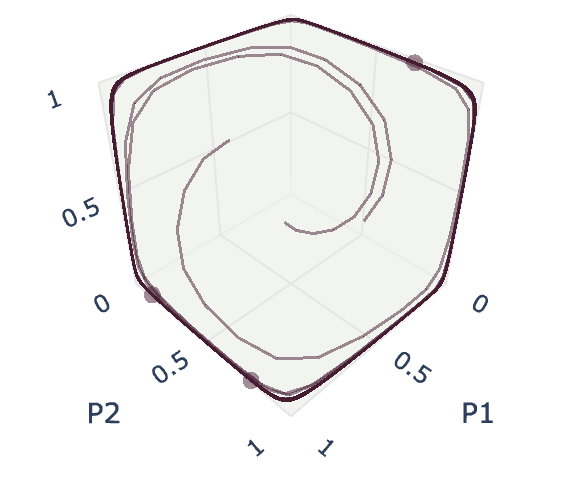}
		\caption*{$T = 0.7$}
	\end{subfigure}
	\begin{subfigure}[b]{0.225\textwidth}
		\centering
		\includegraphics[width=0.9\textwidth]{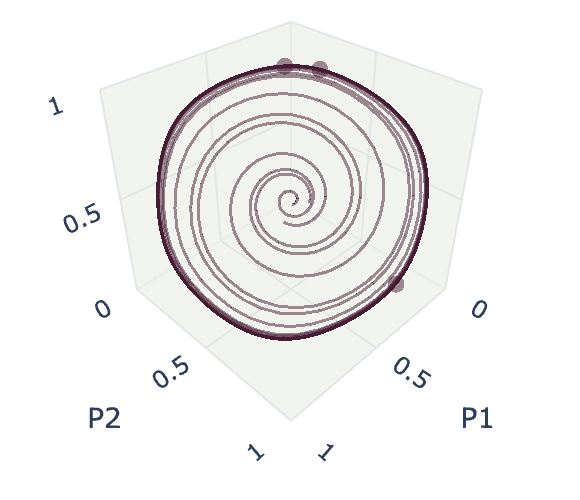}
		\caption*{$T = 1.5$}
	\end{subfigure}
	\begin{subfigure}[b]{0.225\textwidth}
		\centering
		\includegraphics[width=0.9\textwidth]{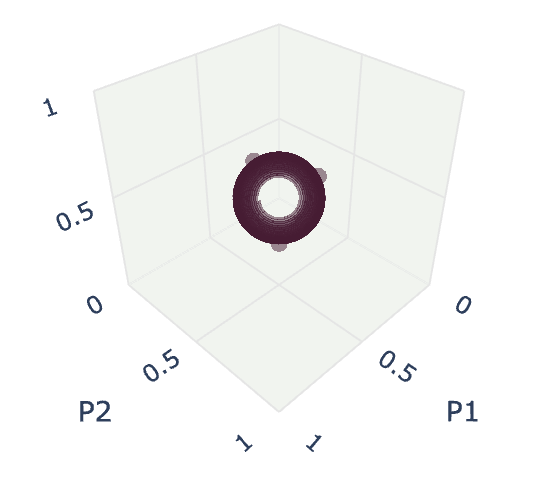}
		\caption*{$T = 2$}
	\end{subfigure}
	\begin{subfigure}[b]{0.225\textwidth}
		\centering
		\includegraphics[width=0.9\textwidth]{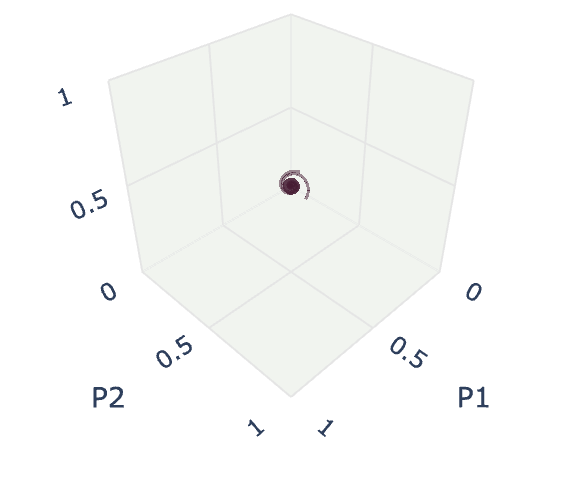}
		\caption*{$T = 2.7$}
	\end{subfigure}

 	\begin{subfigure}[b]{0.225\textwidth}
		\centering
		\includegraphics[width=0.9\textwidth]{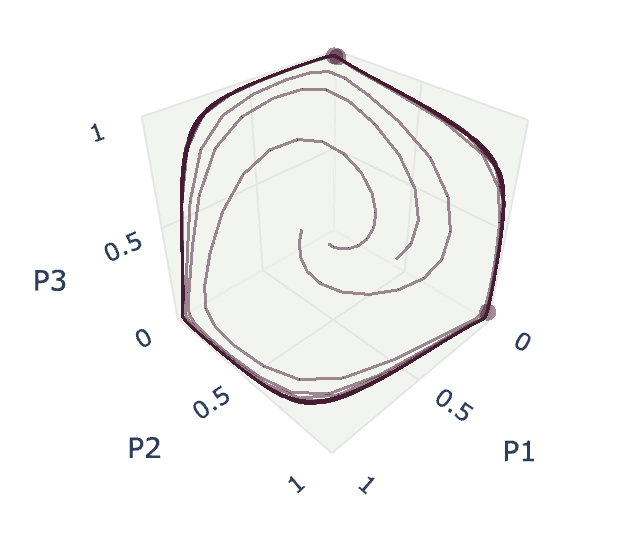}
		\caption*{$T = 0.15$}
	\end{subfigure}
	\begin{subfigure}[b]{0.225\textwidth}
		\centering
		\includegraphics[width=0.9\textwidth]{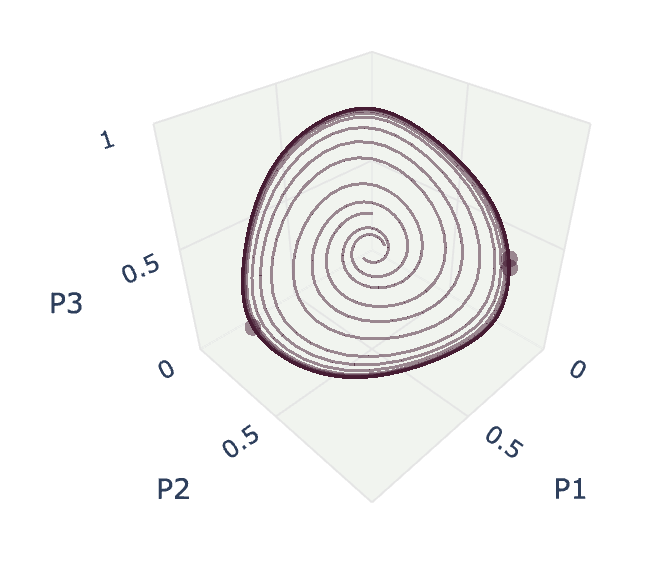}
		\caption*{$T = 0.3$}
	\end{subfigure}
	\begin{subfigure}[b]{0.225\textwidth}
		\centering
		\includegraphics[width=0.9\textwidth]{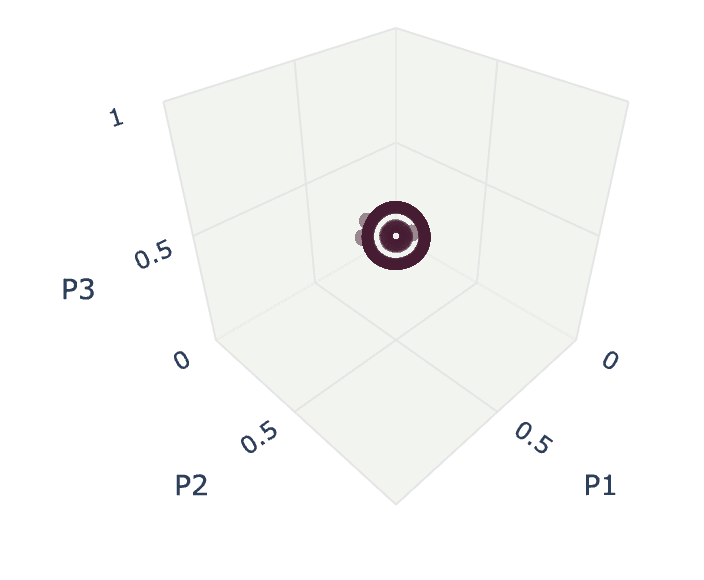}
		\caption*{$T = 0.4$}
	\end{subfigure}
	\begin{subfigure}[b]{0.225\textwidth}
		\centering
		\includegraphics[width=0.9\textwidth]{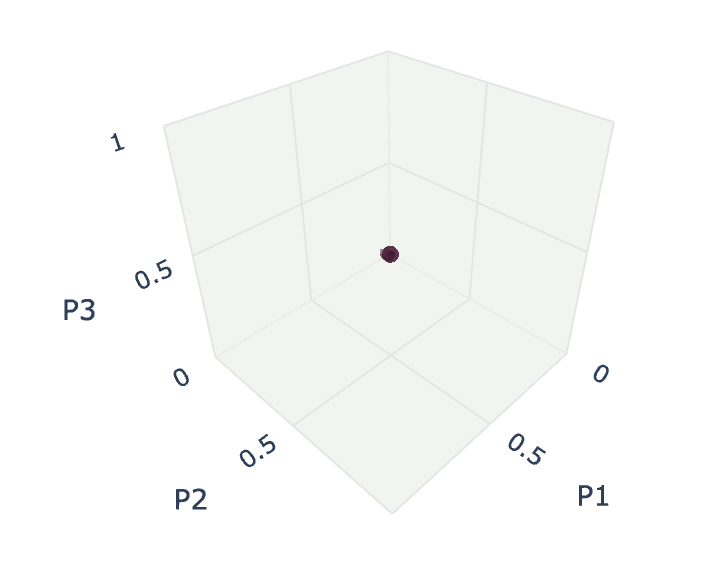}
		\caption*{$T = 0.55$}
	\end{subfigure}
 \caption{Trajectories of Q-Learning in a three agent (Top) Network Chakraborty Game with $\alpha = 7, \beta = 8.5$ (Bottom) Mismatching Game with $M = 2$. Axes denote the probabilities with which each player chooses their first action.}\label{fig::chakraborty-traj}
\end{figure*}

\begin{figure*}[t!]
	\centering
	\begin{subfigure}[b]{0.45\textwidth}
		\centering
		\includegraphics[width=1\textwidth]{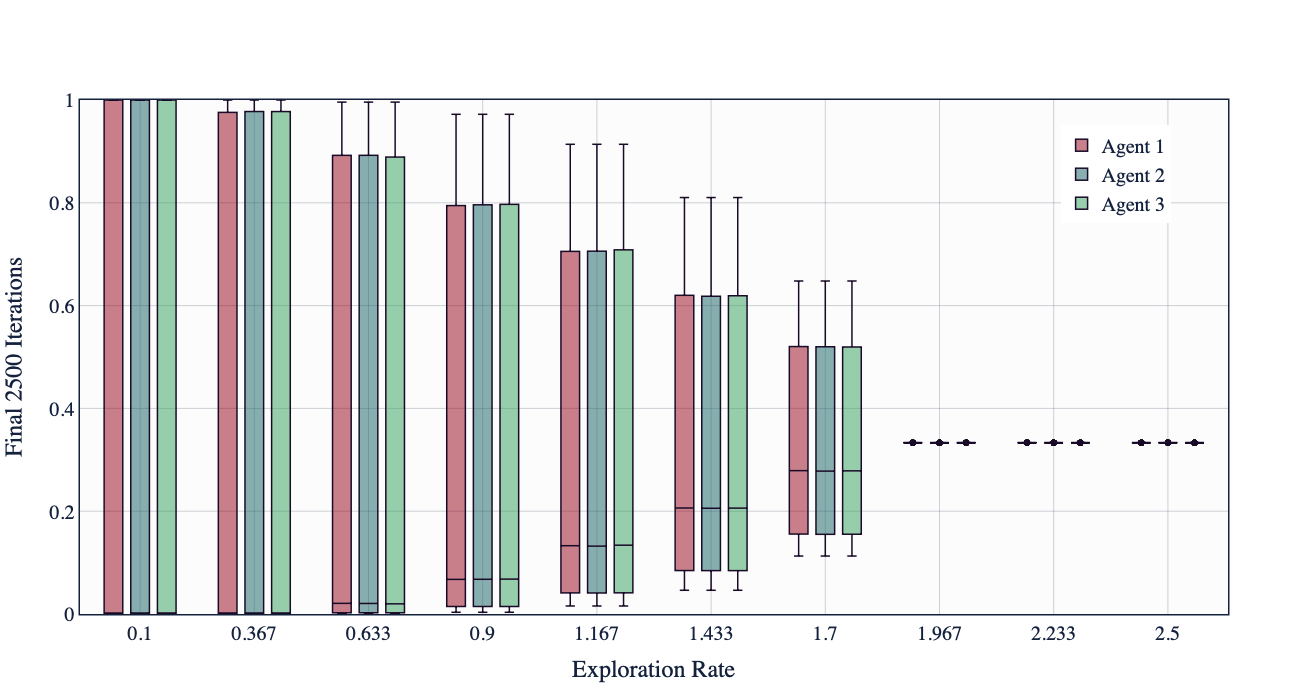}
		\caption*{Fully Connected Network}
	\end{subfigure}
	\begin{subfigure}[b]{0.45\textwidth}
		\centering
		\includegraphics[width=1\textwidth]{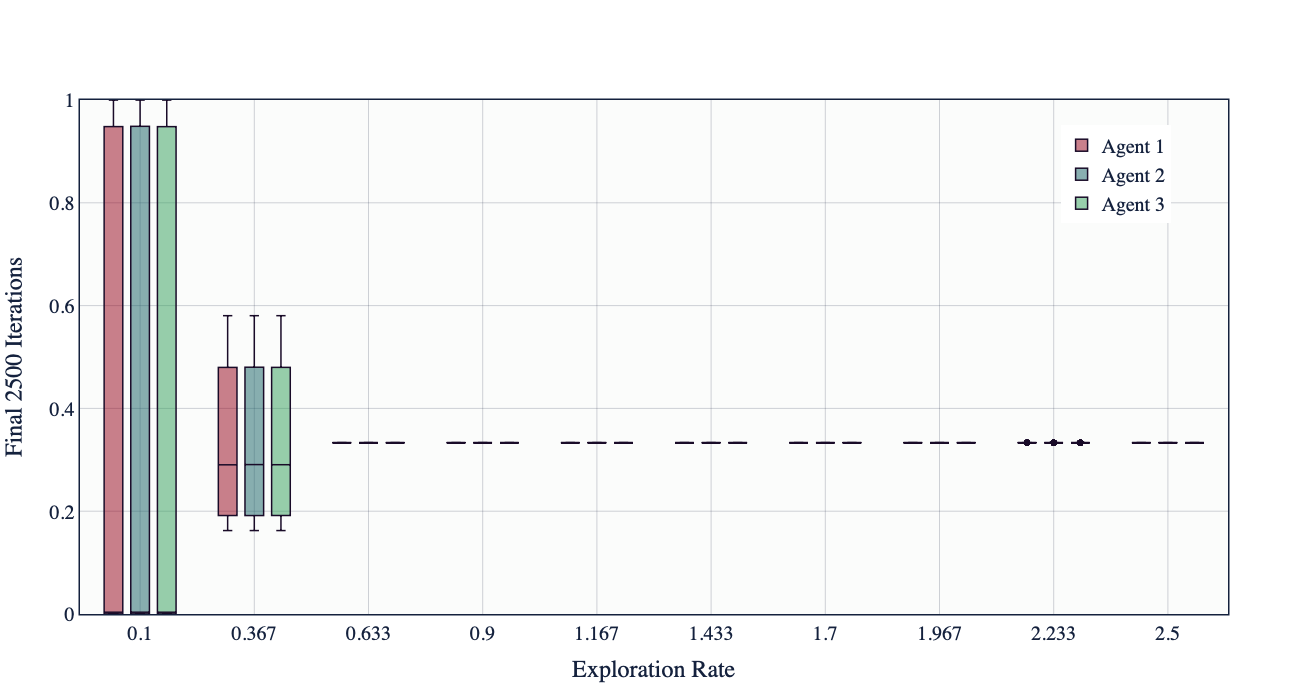}
		\caption*{Ring Network}
	\end{subfigure}
 	\begin{subfigure}[b]{0.45\textwidth}
		\centering
		\includegraphics[width=1\textwidth]{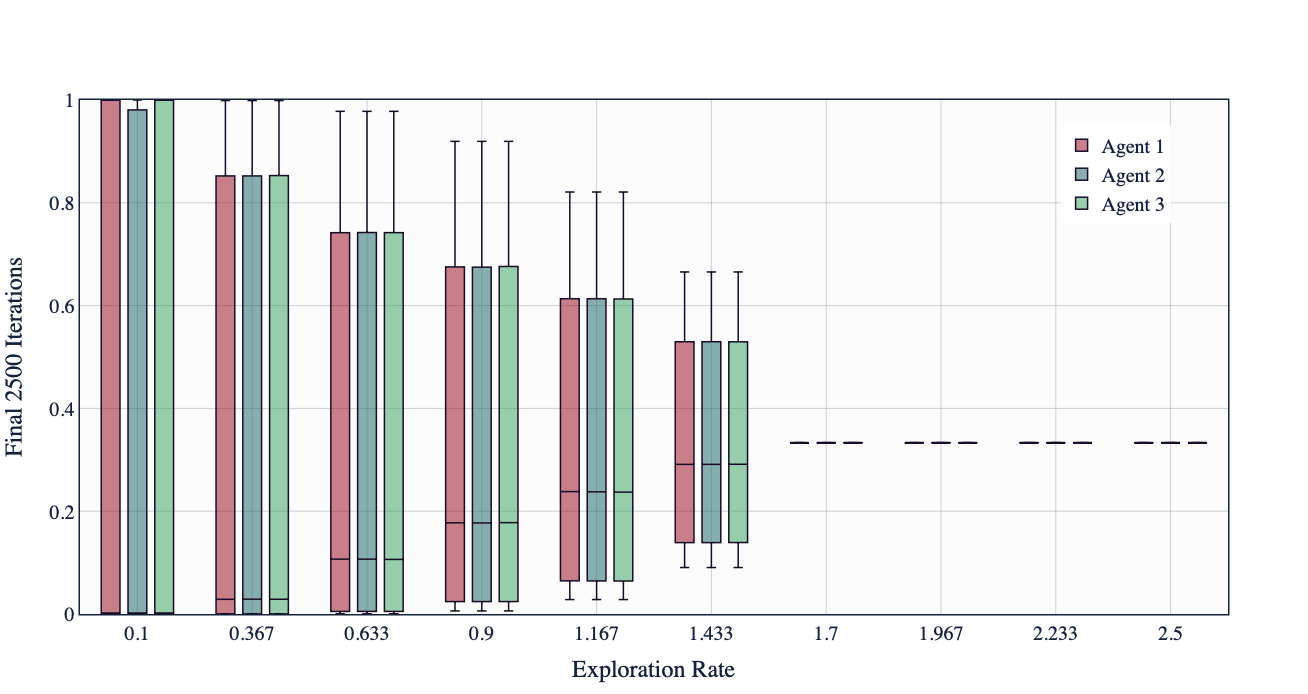}
		\caption*{Fully Connected Network}
	\end{subfigure}
	\begin{subfigure}[b]{0.45\textwidth}
		\centering
		\includegraphics[width=1\textwidth]{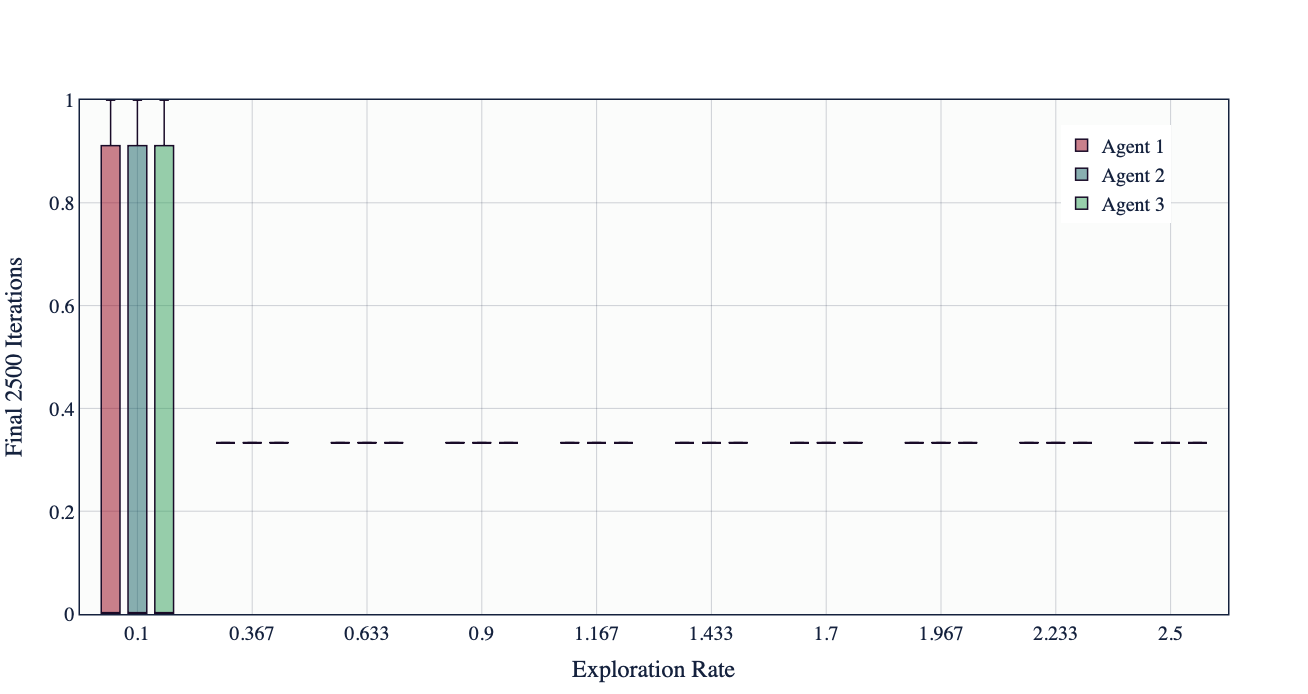}
		\caption*{Ring Network}
	\end{subfigure}
 \caption{Q-Learning in the (Top) Network Shapley Game (Bottom) Network Sato Game with 15 agents. The boxplot depicts the probabilities with which three of the agents play their first action in the final 2500 iterations of learning. This is depicted for varying choices of exploration rate $T$} \label{fig::shapley-box}
\end{figure*}
\begin{figure*}[t!]
	\centering
	\begin{subfigure}[b]{0.45\textwidth}
		\centering
		\includegraphics[width=1.1\textwidth]{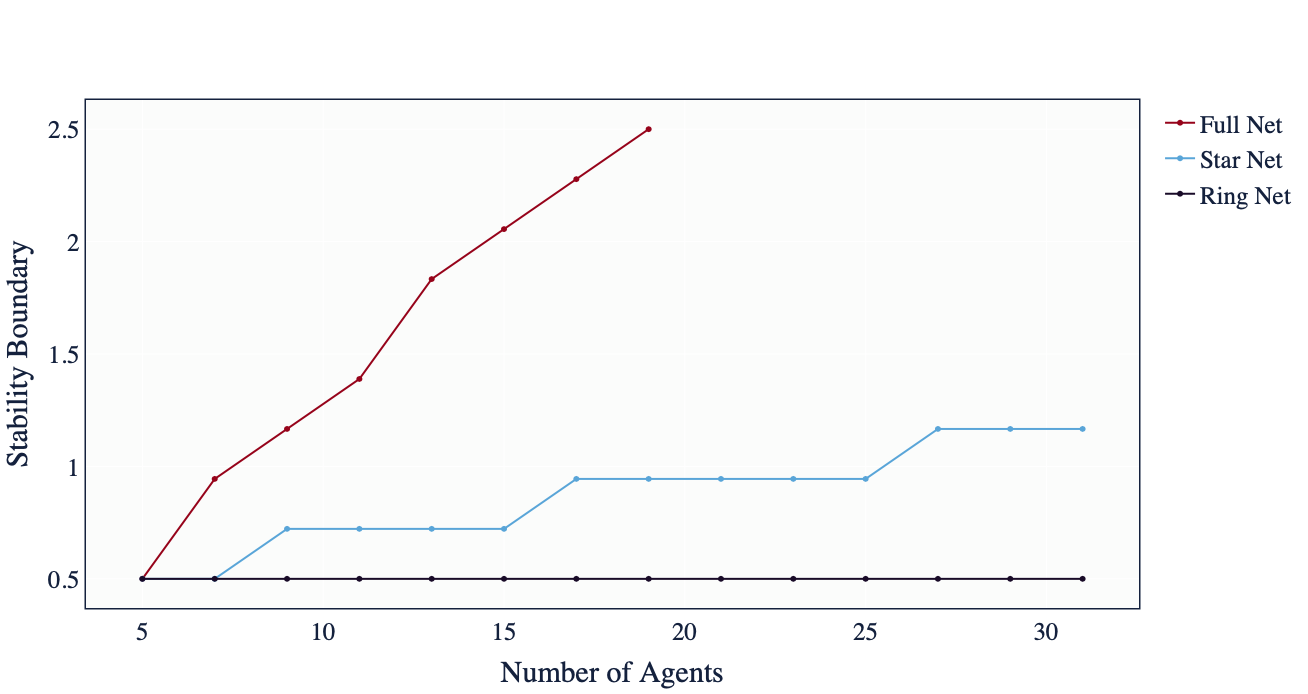}
		\caption*{Shapley Game}
	\end{subfigure}
	\begin{subfigure}[b]{0.45\textwidth}
		\centering
		\includegraphics[width=1.1\textwidth]{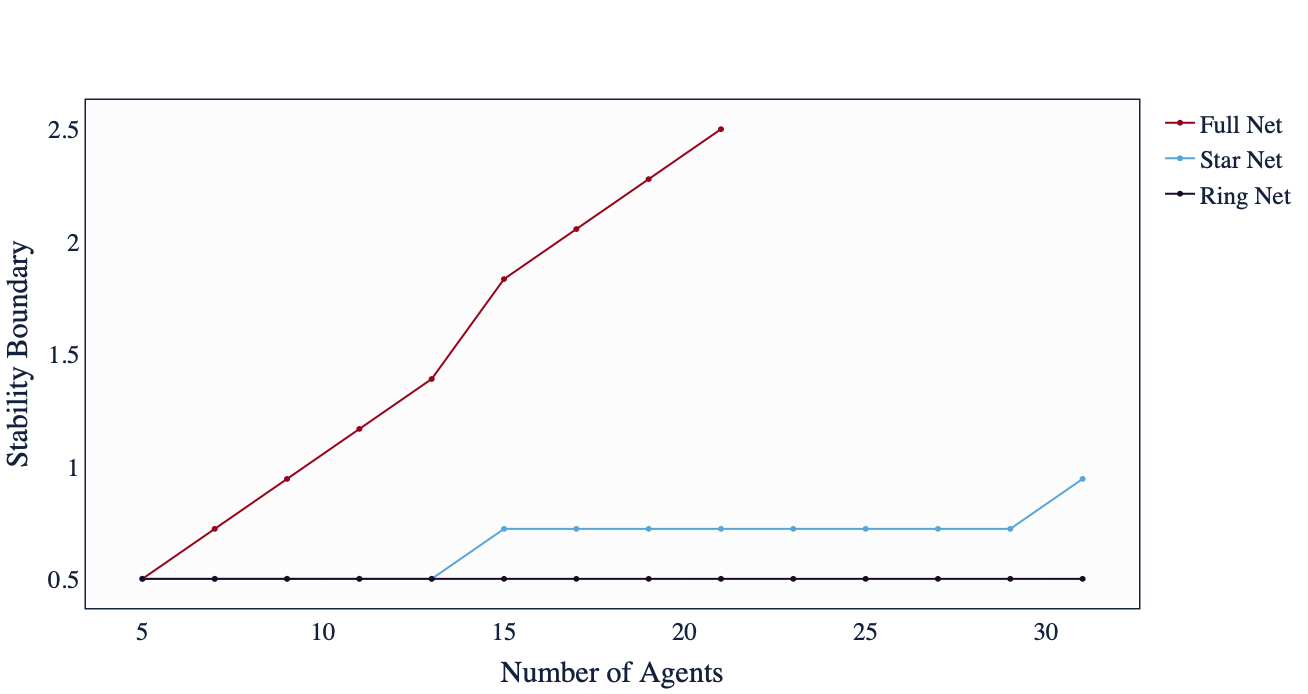}
		\caption*{Sato Game}
	\end{subfigure}
 \caption{Empirically determined stability boundary of Q-Learning measured against the number of agents. Q-Learning is iterated with 10 initial conditions and the game is considered to have converged if, for all agents and initial conditions (\ref{eqn::conv-criteria}) holds with $l = 1 \times 10^{-5}$. The Fully Connected Network, Star Network and Ring Networks are considered.} \label{fig::empirical-boundary}
\end{figure*}
In our experiments, we visualise and exemplify the implications of Theorem \ref{thm::main-thm} on a number of games. In particular, we simulate the Q-Learning algorithm described in Section \ref{sec::LearningModel} and show that Q-Learning asymptotically approaches a unique QRE so long as the exploration rates are sufficiently large. We show, in particular, that the amount of exploration required depends on the structure of the network rather than the total number of agents.

\begin{remark}
    In our experiments, we take all agents $k$ to have the same exploration rate $T$ and so drop the $k$ notation. As the bound (\ref{eqn::main-cond}) must hold for all agents $k$, this assumption does not affect the generality of the results.
\end{remark}

\paragraph{Convergence of Q-Learning.} We first illustrate the convergence of Q-Learning using two representative examples: the \emph{Network Chakraborty Game} and the \emph{Mismatching Pennies Game}. The former was first analysed in \cite{chakraborty:deviations} to characterise chaos in learning dynamics. Formally, the payoff to each agent $k$ is defined as
\begin{align*}
    &u_k(\x_k, \x_{-k}) = \x_k^\top \A \x_l, \; l = k-1 \mod N, \\
    &A = \begin{pmatrix}
        1 & \alpha \\ \beta & 0
    \end{pmatrix}, \; \alpha, \beta \in \R.
\end{align*}
The latter was first analysed in \cite{kleinberg:nashbarrier} in which it was shown that learning dynamics reach a cycle around the boundary of the simplex. Here, the payoffs to each agent are given by
\begin{align*}
    &u_k(\x_k, \x_{-k}) = \x_k^\top \A \x_l, \; l = k-1 \mod N, \\
    &A = \begin{pmatrix}
        0 & 1 \\ M & 0
    \end{pmatrix}, \; M \geq 1.
\end{align*}
We visualise the trajectories generated by running Q-Learning in Figure \ref{fig::chakraborty-traj} in both games for a three agent network and choosing $\alpha = 7, \beta = 8.5, M = 2$. It can be seen that, for low exploration rates, the dynamics reach a limit cycle around the boundary of the simplex. However, as exploration increases, the dynamics are eventually driven towards a fixed point for all initial conditions. The higher requirement on exploration in the Chakraborty Game as compared to the Mismatching Game can be seen as stemming from the higher $\delta_S \approx 8.67$ in the former compared to $\delta_S = 2$ in the latter. 

\paragraph{Network Shapley Game} In the following example, each edge of the network game has associated the same pair of matrices $A, B$ where
\begin{eqnarray*} 
A=\begin{pmatrix}
1 & 0 & \beta \\
            \beta & 1 & 0 \\ 0 & \beta & 1
            \end{pmatrix}, \, B=\begin{pmatrix}
            -\beta & 1 & 0 \\ 0 & -\beta & 1 \\ 1 & 0 & -\beta
            \end{pmatrix},
\end{eqnarray*}    
where $\beta \in (0, 1)$.

This has been analysed in the two-agent case in \cite{shapley:twoperson}, where it was shown that the \emph{Fictitious Play} learning dynamic do not converge to an equilibrium. \cite{hussain:aamas} analysed the network variant of this game for the case of a ring network and numerically showed that convergence can be achieved by Q-Learning through sufficient exploration. In Figure \ref{fig::shapley-box} we examine both a fully connected network and a ring network with 15 agents. Figure \ref{fig::shapley-box} depicts the final 2500 iterations of learning for three agents and 35 initial conditions. It can be seen that, as exploration rates increase Q-Learning is driven towards an equilibrium for all initial conditions. Importantly, the boundary at which equilibrium behaviour occurs is higher in the fully connected network, where $\left\lVert G \right\rVert_\infty = 14$ than in the ring network, where $\left\lVert G \right\rVert_\infty = 2$.

\paragraph{Network Sato Game} We also analyse the behaviour of Q-Learning in a variant of the game introduced in \cite{sato:rps}, where it was shown that chaotic behaviour is exhibited by learning dynamics in the two-agent case. We extend this towards a network game by associating each edge with the payoff matrices $A, B$ given by
\begin{eqnarray*} 
A=\begin{pmatrix}
\epsX & -1 & 1 \\
1 & \epsX & -1 \\
-1 & 1 & \epsX
\end{pmatrix}, \, B=\begin{pmatrix}
\epsY & -1 & 1 \\
1 & \epsY & -1 \\
-1 & 1 & \epsY
\end{pmatrix},
\end{eqnarray*}   
where $\epsX, \epsY \in \R$. Notice that for $\epsX = \epsY = 0$, this corresponds to the classic Rock-Paper-Scissors game which is zero-sum so that, by Corollary 1, Q-Learning will converge to an equilibrium with any positive exploration rates. We choose $\epsX=0.01, \epsY=-0.05$ in order to stay consistent with \cite{sato:rps} which showed chaotic dynamics for this choice. The boxplot once again shows that sufficient exploration leads to convergence of all initial conditions. However, the amount of exploration required is significantly smaller than that of the Network Shapley Game. This can be seen as being due to the significantly lower interaction coefficient of the Sato game $\delta_S = 0.05$ as compared to the Shapley game $\delta_S = 2$.

\paragraph{Stability Boundary} In these experiments we empirically determine the dependence of the stability boundary w.r.t.~the number of agents. For accurate comparison with Figure \ref{fig::stability-boundary}, we consider the Network Sato and Shapley Games in a fully-connected network, star network and ring network. We iterate Q-Learning for various values of $T$ and determine whether the dynamics have converged. To evaluate convergence, we record the final 2500 iterations and check whether the relative difference between the maximum and minimum strategy components $x_{ki}$ is less than some tolerance $l$ for all agents $k$, actions $i$ and initial conditions. More formally we aim to determine if
\begin{equation}\label{eqn::conv-criteria}
    \lim_{t \rightarrow \infty} \left( \frac{\max_t x_{ki}(t) - \min_t x_{ki}(t)}{\max_t x_{ki}(t)} \right) < l
\end{equation}
holds for all $k \in \agentset$ and all $i \in \actionset{k}$. In Figure \ref{fig::empirical-boundary} we plot the smallest exploration rate $T$ for which (\ref{eqn::conv-criteria}) holds for varying choices of $N$, using $l = 1\times10^{-5}$. It can be seen that the prediction of (\ref{eqn::main-cond}) holds, in that the number of agents plays no impact for the ring network whereas the increase in the fully-connected network is linear in $N$. In addition, it is clear that the stability boundary increases slower in the Sato game than in the Shapley game, owing to the smaller interaction coefficient. 

An additional point to note is that the stability boundary for the star network increases slower than the fully-connected network in all games. We anticipate that this is due to the fact that the $2$-norm $\lVert G \rVert_2$ in the star network is smaller than that of the fully-connected network (c.f.¬Figure \ref{fig::example-networks}). We therefore conjecture that a tighter lower bound on exploration can be obtained using the $2$-norm, which we consider an important avenue for future work.

\section{Conclusion}

In this paper we show that the Q-Learning dynamics is guaranteed to converge in arbitrary network games, independent of any restrictive assumptions such as network zero-sum or potential. This allows us to make a branching statement which applies across all network games.

In particular, our analysis shows that convergence of the Q-Learning dynamics can be achieved through sufficient exploration, where the bound depends on the pairwise interaction between agents and the structure of the network. Overall, compared to the literature, we are able to tighten the bound on sufficient exploration and show that, under certain network interactions, the bound does not increase with the total number of agents. This allows for stability to be guaranteed in network games with many players.

A fruitful direction for future research would be to capture the effect of the payoffs through a tighter bound than the interaction coefficient and to explore further how properties of the network affect the bound. In addition, whilst there is still much to learn in the behaviour of Q-Learning in stateless games, the introduction of the state variable in the Q-update is a valuable next step.

% Acknowledgements should only appear in the accepted version.
\section*{Acknowledgements}

Aamal Hussain and Francesco Belardinelli are partly funded by the UKRI Centre for Doctoral Training in Safe and Trusted Artificial Intelligence (grant number EP/S023356/1). Dan Leonte acknowledges support from the EPSRC Centre for Doctoral Training in Mathematics of Random Systems: Analysis, Modelling and Simulation (EP/S023925/1). This research/project is supported in part by the National Research Foundation, Singapore and DSO National Laboratories under its AI Singapore Program (AISG Award No: AISG2-RP-2020-016), NRF 2018 Fellowship NRF-NRFF2018-07, NRF2019-NRF-ANR095 ALIAS grant, grant PIESGP-AI-2020-01, AME Programmatic Fund (Grant No.A20H6b0151) from the Agency for Science, Technology and Research (A*STAR) and Provost’s Chair Professorship grant RGEPPV2101.

% In the unusual situation where you want a paper to appear in the
% references without citing it in the main text, use \nocite

\bibliography{references}
\bibliographystyle{icml2023}

%%%%%%%%%%%%%%%%%%%%%%%%%%%%%%%%%%%%%%%%%%%%%%%%%%%%%%%%%%%%%%%%%%%%%%%%%%%%%%%
%%%%%%%%%%%%%%%%%%%%%%%%%%%%%%%%%%%%%%%%%%%%%%%%%%%%%%%%%%%%%%%%%%%%%%%%%%%%%%%
% APPENDIX
%%%%%%%%%%%%%%%%%%%%%%%%%%%%%%%%%%%%%%%%%%%%%%%%%%%%%%%%%%%%%%%%%%%%%%%%%%%%%%%
%%%%%%%%%%%%%%%%%%%%%%%%%%%%%%%%%%%%%%%%%%%%%%%%%%%%%%%%%%%%%%%%%%%%%%%%%%%%%%%
\newpage
\appendix
\onecolumn

\section{Proof of Theorem \ref{thm::main-thm}}

\paragraph{Preliminaries} We begin in this section by defining the various tools that we will use in our proof. Recall that an operator $f : \X \subset \R^n \rightarrow \R^n$ is \emph{strongly convex} with constant $\alpha$ if, for all $\x, \y \in \X$
\begin{equation*}
    f(\y) \geq f(\x) + Df(\x)^\top (\y - \x) + \frac{\alpha}{2}\lVert\x - \y \rVert^2_2.
\end{equation*}
It is known that, if $f(\x)$ is strongly convex, then its Hessian $D^2_{\x} f(\x)$ is strongly positive definite with constant $\alpha$. Thus, all eigenvalues of $D^2_{\x} f(\x)$ are larger than $\alpha$. To apply this in our setting, we use the following result.
\begin{proposition}[\cite{melo:qre}] \label{prop::strong-conv}
        The function $f(\x_k) = T_k \langle \x_k, \ln \x_k \rangle$ is strongly convex with constant $T_k$.
\end{proposition}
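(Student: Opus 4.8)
The plan is to prove the proposition through the standard second-order characterisation of strong convexity, i.e.~the converse of the implication recalled immediately before the statement: a twice-differentiable function whose Hessian satisfies $D^2 f(\x_k) \succeq \alpha I$ throughout its domain is strongly convex with constant $\alpha$. First I would compute the derivatives of $f(\x_k) = T_k \langle \x_k, \ln \x_k \rangle = T_k \sum_i x_{ki} \ln x_{ki}$ on the relative interior of the simplex $\Delta(\actionset{k})$, where every $x_{ki} > 0$ and $f$ is smooth. A direct calculation gives $\partial f / \partial x_{ki} = T_k(\ln x_{ki} + 1)$, so the Hessian is the diagonal matrix $D^2 f(\x_k) = T_k\,\mathrm{diag}(1/x_{k1}, \ldots, 1/x_{kn})$.

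The key observation is then that the constant $T_k$ falls straight out of the simplex constraint. Since $\x_k \in \Delta(\actionset{k})$ satisfies $\sum_i x_{ki} = 1$ with $x_{ki} \geq 0$, each coordinate obeys $x_{ki} \leq 1$, and hence every diagonal entry satisfies $T_k / x_{ki} \geq T_k$. Because the Hessian is diagonal, its eigenvalues are exactly these entries, so $D^2 f(\x_k) \succeq T_k I$ at every interior point. The constant is sharp: as one coordinate tends to $1$ the corresponding eigenvalue tends to $T_k$, so no larger constant is possible.

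To convert this eigenvalue bound into the defining inequality of strong convexity, I would apply Taylor's theorem with mean-value remainder along the segment joining any two strategies $\x_k, \y_k$ in the relative interior: for some $\xi$ on that segment, $f(\y_k) = f(\x_k) + Df(\x_k)^\top (\y_k - \x_k) + \tfrac12 (\y_k - \x_k)^\top D^2 f(\xi) (\y_k - \x_k)$. Since $\xi$ again lies in the simplex, each $\xi_i \leq 1$, and the quadratic remainder equals $\tfrac{T_k}{2}\sum_i (y_{ki}-x_{ki})^2/\xi_i \geq \tfrac{T_k}{2}\lVert \y_k - \x_k\rVert_2^2$, which is precisely the claimed inequality. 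The open segment stays in the relative interior by convexity of the simplex, so this step is legitimate.

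The main obstacle, and the only genuine subtlety, is the boundary of the simplex, where some $x_{ki} = 0$ and both $\ln x_{ki}$ and the Hessian entry $1/x_{ki}$ diverge, so the Hessian argument cannot be invoked there directly. I would deal with this by first establishing the strong-convexity inequality on the relative interior as above, and then extending it to the closed simplex by continuity, using the convention $0 \ln 0 = 0$ under which $f$ is continuous on all of $\Delta(\actionset{k})$; passing to the limit preserves the inequality. The divergence of the curvature at the boundary is harmless, since it only strengthens the lower bound there.
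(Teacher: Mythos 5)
You should first be aware that the paper contains no proof of this proposition at all: it is imported by citation from \cite{melo:qre}, and the only thing the paper ever does with it is invoke, in the proof of Theorem~\ref{thm::main-thm}, the implication ``strongly convex with constant $T_k$ $\Rightarrow$ Hessian $\succeq T_k I$'' to conclude that the block-diagonal part $D(\x)$ of the pseudo-Hessian satisfies $\lambda_{\min}(D(\x)) \geq \min_k T_k$. Your argument is correct where it matters and is the standard proof: compute the gradient $T_k(\ln x_{ki}+1)$ and the diagonal Hessian $T_k\,\mathrm{diag}(1/x_{k1},\dots,1/x_{kn})$, use $x_{ki}\le 1$ on the simplex to get $D^2 f(\x_k)\succeq T_k I$ on the relative interior, and convert this into the first-order inequality via Taylor's theorem with Lagrange remainder along segments, which stay in the relative interior. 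Note that you prove the converse implication to the one the paper quotes (second-order bound $\Rightarrow$ strong convexity, rather than the reverse), so your Hessian computation in fact delivers the quantity the paper needs directly --- each diagonal block of $D(\x)$ is exactly your Hessian, since the payoff term of $u_k^H$ is linear in $\x_k$ and contributes nothing to $D^2_{\x_k\x_k}$.

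Two caveats, neither of which invalidates the proposition as stated. First, your sharpness remark is wrong for strong convexity on the simplex: admissible displacements $\y_k - \x_k$ lie in the hyperplane $\sum_i v_i = 0$, and the eigenvector $\e_i$ realising the near-minimal eigenvalue $T_k/x_{ki} \approx T_k$ is not such a displacement. Restricted to zero-sum directions the best constant is $2T_k$ (the Bregman divergence of negative entropy is the KL divergence, and Pinsker's inequality gives $\mathrm{KL}(\p\,\|\,\q) \geq \tfrac12 \lVert \p - \q \rVert_1^2 \geq \lVert \p - \q\rVert_2^2$ on zero-sum differences). Since the proposition only claims $T_k$, this aside is harmless, but it should be deleted. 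Second, your continuity extension to the closed simplex works only in the second argument: the defining inequality involves $Df(\x_k)$, which does not exist when a coordinate of the base point vanishes, so you may send $\y_k$ to the boundary but not $\x_k$. The honest conclusion of your argument is strong convexity on the relative interior (or on the closed simplex under a gradient-free or subgradient formulation), and that is all the paper requires: QRE are softmax outputs, hence interior, and the eigenvalue bound on $D(\x)$ in Theorem~\ref{thm::main-thm} is only ever applied at points where all coordinates are positive.
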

Then, $D^2_{\x_k} f(\x_k)$ has eigenvalues larger than $T_k$. 

In addition, the following definitions and properties hold for any matrix $A$.
\begin{enumerate}
    \item $\lVert A \rVert_2 = \sqrt{\lambda_{\max}(A^\top A)}$ where $\lambda_{\max}$ is the largest eigenvalue of $A$
    \item $\lVert A \rVert_\infty = \max_i \sum_{j} |[A]_{ij}|$
    \item $\rho(A) = \max_i |\lambda_i(A)|$ where $\lambda_i(A)$ denotes an eigenvalue of $A$
\end{enumerate}
\begin{proposition}[Weyl's Inequality] \label{prop::weyl}
        Let $J = D + N$ where $D$ and $N$ are symmetric matrices. Then it holds that
        \begin{equation*}
            \lambda_{\min}(J) \geq \lambda_{\min} (D) + \lambda_{\min} (N).
        \end{equation*}
        where $\lambda_{\min}(A)$ denotes the smallest eigenvalue of a matrix. 
\end{proposition}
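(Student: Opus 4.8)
The plan is to prove Weyl's inequality through the variational (Courant--Fischer) characterization of the smallest eigenvalue of a symmetric matrix. The key fact I would invoke is that, since $D$ and $N$ are symmetric, each admits an orthonormal eigenbasis by the spectral theorem, and consequently the smallest eigenvalue of any symmetric matrix $A$ equals the minimum of its Rayleigh quotient over the unit sphere:
\begin{equation*}
    \lambda_{\min}(A) = \min_{\lVert \x \rVert_2 = 1} \x^\top A \x.
\end{equation*}
Writing $A$ in its eigenbasis makes this immediate: expanding a unit vector $\x = \sum_i c_i \V_i$ in orthonormal eigenvectors $\V_i$ with eigenvalues $\mu_i$ gives $\x^\top A \x = \sum_i \mu_i c_i^2 \geq (\min_i \mu_i) \sum_i c_i^2 = \lambda_{\min}(A)$, with equality attained at the eigenvector for the smallest eigenvalue.

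First, I would apply this characterization to $J = D + N$, which is itself symmetric as a sum of symmetric matrices, so the characterization is legitimate. For any unit vector $\x$, bilinearity of the quadratic form splits it as $\x^\top J \x = \x^\top D \x + \x^\top N \x$. Second, I would bound each summand below using the Rayleigh characterization applied to $D$ and to $N$ individually: for every unit $\x$ we have $\x^\top D \x \geq \lambda_{\min}(D)$ and $\x^\top N \x \geq \lambda_{\min}(N)$, so that $\x^\top J \x \geq \lambda_{\min}(D) + \lambda_{\min}(N)$ holds pointwise on the sphere.

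Finally, since this lower bound is uniform over all unit vectors and its right-hand side is a constant independent of $\x$, I would take the minimum over $\lVert \x \rVert_2 = 1$ on the left to conclude
\begin{equation*}
    \lambda_{\min}(J) = \min_{\lVert \x \rVert_2 = 1} \x^\top J \x \geq \lambda_{\min}(D) + \lambda_{\min}(N).
\end{equation*}
I do not anticipate a serious obstacle, as the argument rests only on the spectral theorem and the elementary observation that the minimum of a sum is at least the sum of the minima. The one point deserving \emph{explicit} comment is why the result is an inequality rather than an equality: the minimizer of the combined Rayleigh quotient need not coincide with the separate minimizers of $\x^\top D \x$ and $\x^\top N \x$, so the two individual lower bounds cannot in general be attained simultaneously, which is exactly the source of the gap.
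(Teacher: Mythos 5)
Your proof is correct. Note that the paper itself gives no proof of this proposition: it is imported as a known classical fact (a special case of Weyl's eigenvalue inequalities) and used as a black box in the proof of the main theorem, so there is no argument of the authors' to compare against. Your Rayleigh-quotient argument is the standard textbook proof: the Courant--Fischer characterization $\lambda_{\min}(A) = \min_{\lVert \x \rVert_2 = 1} \x^\top A \x$ is legitimate for each of $D$, $N$, and $J$ by symmetry, the quadratic form splits by bilinearity, the two pointwise lower bounds hold uniformly on the unit sphere, and minimizing the left-hand side finishes the claim. Your closing remark correctly identifies why the bound is one-sided: the minimizer for $J$ need not minimize the Rayleigh quotients of $D$ and $N$ separately. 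No gaps.
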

% \begin{proposition} \label{prop::kron}
%     Let $G, A$ be matrices and $\otimes$ denote the Kronecker product. Then
%     \begin{equation*}
%         \lVert G \otimes A \rVert_2 = \lVert G \rVert_2 \lVert A \rVert_2.
%     \end{equation*}
% \end{proposition}
\begin{proposition} \label{prop::spectral-radius}
    Let $A$ be a symmetric matrix. Then
    \begin{equation*}
       |\lambda_{\min}(A)| \leq \rho(A) = \lVert A \rVert_2.
       \end{equation*}
\end{proposition}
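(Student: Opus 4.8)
The plan is to establish the stated chain by proving the inequality $|\lambda_{\min}(A)| \le \rho(A)$ and the equality $\rho(A) = \lVert A \rVert_2$ separately, both of which reduce to elementary spectral facts once we exploit the symmetry of $A$. Throughout I would use the definitions recorded in the preliminaries, namely $\rho(A) = \max_i |\lambda_i(A)|$ and $\lVert A \rVert_2 = \sqrt{\lambda_{\max}(A^\top A)}$.

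First I would dispatch the inequality. Since $A$ is symmetric, all of its eigenvalues are real, so $\lambda_{\min}(A)$ is genuinely the least of the real numbers $\lambda_1(A), \ldots, \lambda_n(A)$ and in particular is one of the eigenvalues of $A$. Its modulus $|\lambda_{\min}(A)|$ therefore appears in the set $\{|\lambda_i(A)|\}_i$ over which $\rho(A)$ is the maximum, whence $|\lambda_{\min}(A)| \le \max_i |\lambda_i(A)| = \rho(A)$ with no further work. The substantive step is the equality $\rho(A) = \lVert A \rVert_2$. I would start from $\lVert A \rVert_2 = \sqrt{\lambda_{\max}(A^\top A)}$ and use symmetry to write $A^\top A = A^2$. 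The crux is that the eigenvalues of $A^2$ are exactly the squares of those of $A$: invoking the spectral theorem, decompose $A = Q \Lambda Q^\top$ with $Q$ orthogonal and $\Lambda = \mathrm{diag}(\lambda_1, \ldots, \lambda_n)$, so that $A^2 = Q \Lambda^2 Q^\top$ has eigenvalues $\lambda_1^2, \ldots, \lambda_n^2$. Hence $\lambda_{\max}(A^2) = \max_i \lambda_i^2 = (\max_i |\lambda_i|)^2 = \rho(A)^2$, and taking square roots yields $\lVert A \rVert_2 = \rho(A)$.

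There is no genuine obstacle here, as this is a standard fact; the only points requiring care are that symmetry is used twice—once to ensure the eigenvalues are real (so that $\lambda_{\min}$ and the moduli in $\rho$ are well defined in the intended sense) and once to collapse $A^\top A$ into $A^2$—and that the orthogonal diagonalization is what legitimately justifies squaring each eigenvalue when passing from $A$ to $A^2$. Concatenating the two parts gives the claimed chain $|\lambda_{\min}(A)| \le \rho(A) = \lVert A \rVert_2$, completing the argument.
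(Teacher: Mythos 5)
Your proof is correct: the inequality holds because $\lambda_{\min}(A)$ is itself an eigenvalue of the symmetric (hence real-spectrum) matrix $A$, so its modulus is bounded by $\rho(A)=\max_i|\lambda_i(A)|$; and the equality follows from the spectral decomposition $A=Q\Lambda Q^\top$, which gives $A^\top A=A^2=Q\Lambda^2Q^\top$ and thus $\lVert A\rVert_2=\sqrt{\max_i\lambda_i^2}=\rho(A)$. The paper states this proposition as a standard linear-algebra fact and provides no proof of its own, so your argument is precisely the intended justification and there is nothing to compare it against.
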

The following result is used in our proof to be able to parameterise the effect of pairwise interactions by $\delta_S$.
\begin{lemma} \label{lem::two-norm}
    Let $G \in \mathcal{M}_{N}(\R)$ be matrix for which each entry $g_{ij} \defeq \left[G\right]_{ij}$ is either $0$ or $1$. Let $N \in \mathcal{M}_{Nn}(\R)$ be a block matrix such that
    \begin{equation*}
        \left[N\right]_{ij} = \begin{cases}
            A^{ij} & \text{ if } g_{ij} = 1 \\ %[G]_{ij} = 1 \\
            \zeros & \text{ otherwise}
        \end{cases},
    \end{equation*}
    where $A^{ij} \in \mathcal{M}_n(\mathbb{R})$ are matrices of the same dimension. %In addition, let $A \in M_n(\R)$ be a matrix which satisfies $\lVert A \rVert_2 \geq \lVert B^{ij} \rVert_2$ for all $(i, j) \in \edgeset$. Finally let $\Tilde{N} \in M_{Nn}(\R)$ be a block matrix given by
    %\begin{equation*}
    %    \left[\Tilde{N}\right]_{ij} = \begin{cases}
    %        A & \text{ if } [G]_{ij} = 1 \\
    %        \zeros & \text{ otherwise}
    %    \end{cases}
    %\end{equation*}
    %so that $\Tilde{N} = G \otimes A$. 
    Then 
    % \begin{equation*}
    %     \left \lVert N \right \rVert_2 \leq \left \lVert G \right \rVert_2 \max_{ij} \left\lVert A_{ij} \right\rVert_2.
    % \end{equation*}
    \begin{equation*}
 \left\lVert N \right\rVert_2 \leq \sqrt{\left\lVert G \right\rVert_1 \left\lVert G \right\rVert_\infty}\max_{1 \le i,j \le n} \left\lVert A_{ij} \right\rVert_2.
    \end{equation*}
\end{lemma}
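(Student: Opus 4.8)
The plan is to bound the operator $2$-norm of the block matrix $N$ directly from its action on an arbitrary vector, and then to reduce the estimate to the scalar matrix $G$ acting on a vector of block norms. First I would write a generic vector as $\x = (\x_1, \dots, \x_N)$ with each $\x_i \in \R^n$, so that $\lVert \x \rVert_2^2 = \sum_i \lVert \x_i \rVert_2^2$. The $i$-th block of $N\x$ is $\sum_{j : g_{ij} = 1} A^{ij} \x_j$, and I would estimate each such block using the triangle inequality together with the definition of the operator norm,
\[
\Bigl\lVert \sum_{j : g_{ij}=1} A^{ij} \x_j \Bigr\rVert_2 \le \sum_{j : g_{ij}=1} \lVert A^{ij} \rVert_2 \, \lVert \x_j \rVert_2 \le a \sum_{j} g_{ij} \lVert \x_j \rVert_2,
\]
where $a \defeq \max_{i,j} \lVert A^{ij} \rVert_2$.

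Next I would introduce the auxiliary nonnegative vector $\y \in \R^N$ with components $y_i \defeq \lVert \x_i \rVert_2$, so that the right-hand side above is exactly $a\,(G\y)_i$, which is itself nonnegative because $G$ and $\y$ have nonnegative entries. Squaring and summing over $i$ then gives
\[
\lVert N\x \rVert_2^2 = \sum_i \Bigl\lVert \sum_{j: g_{ij}=1} A^{ij}\x_j \Bigr\rVert_2^2 \le a^2 \sum_i (G\y)_i^2 = a^2 \lVert G\y \rVert_2^2 \le a^2 \lVert G \rVert_2^2 \, \lVert \y \rVert_2^2.
\]
Since $\lVert \y \rVert_2^2 = \sum_i \lVert \x_i \rVert_2^2 = \lVert \x \rVert_2^2$, taking the supremum over unit vectors yields the sharper intermediate bound $\lVert N \rVert_2 \le a \, \lVert G \rVert_2$.

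Finally I would invoke the classical interpolation inequality relating the induced norms of $G$, namely
\[
\lVert G \rVert_2^2 = \rho(G^\top G) \le \lVert G^\top G \rVert_\infty \le \lVert G^\top \rVert_\infty \, \lVert G \rVert_\infty = \lVert G \rVert_1 \, \lVert G \rVert_\infty,
\]
which uses that the spectral radius is dominated by any induced norm, that $\lVert \cdot \rVert_\infty$ is submultiplicative, and that $\lVert G^\top \rVert_\infty = \lVert G \rVert_1$. Combining this with $\lVert N \rVert_2 \le a \lVert G \rVert_2$ gives precisely $\lVert N \rVert_2 \le \sqrt{\lVert G \rVert_1 \lVert G \rVert_\infty}\, \max_{i,j} \lVert A^{ij} \rVert_2$ as claimed; note in particular that for symmetric $G$ this collapses to $\lVert N \rVert_2 \le a \lVert G \rVert_\infty$, which is the form needed in \cref{thm::main-thm}.

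I expect the only genuinely delicate step to be the reduction from the block matrix to $G$. The key idea is that dominating every block by its largest singular value $a$ decouples the payoff matrices $A^{ij}$ from the combinatorial structure, so that the $\{0,1\}$ pattern $g_{ij}$ reassembles exactly into $G$ applied to the vector $\y$ of block norms. It is essential here that the entries of $G$ and of $\y$ are nonnegative, since this is what lets the termwise triangle-inequality bound survive the squaring step; everything else is a routine application of submultiplicativity and the standard comparison among the $1$-, $2$-, and $\infty$-norms.
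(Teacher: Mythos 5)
Your proof is correct, and it diverges from the paper's argument at exactly the step you flag as delicate. Both proofs begin identically: decompose the vector into blocks, apply the triangle inequality and submultiplicativity blockwise to get $\lVert (N\x)_i \rVert_2 \le a \sum_j g_{ij} \lVert \x_j \rVert_2$ with $a = \max_{i,j}\lVert A^{ij}\rVert_2$. From there the paper proceeds by brute force: it squares this bound, expands the resulting double sum $\sum_i \sum_{k,l} g_{ik}g_{il}\lVert v^k\rVert_2 \lVert v^l \rVert_2$, applies $ab \le \tfrac{1}{2}(a^2+b^2)$ to symmetrize, and then bounds the row sums by $\lVert G \rVert_\infty$ and the column sums by $\lVert G \rVert_1$ directly. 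You instead recognize the right-hand side as $a\,(G\y)_i$ where $\y$ is the vector of block norms, which lets the operator norm of $G$ absorb the entire combinatorial estimate in one line: $\lVert N\x \rVert_2 \le a \lVert G \rVert_2 \lVert \x \rVert_2$. You then recover the stated bound from the classical comparison $\lVert G \rVert_2 \le \sqrt{\lVert G \rVert_1 \lVert G \rVert_\infty}$. Your route is not only shorter but strictly stronger: the intermediate inequality $\lVert N \rVert_2 \le a \lVert G \rVert_2$ is precisely the $2$-norm refinement that the authors \emph{conjecture} in the experiments section (motivated by the star network, where $\lVert G \rVert_2 = \sqrt{N-1}$ while $\lVert G \rVert_\infty = N-1$) and defer to future work. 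Feeding your intermediate bound into the paper's proof of Theorem~\ref{thm::main-thm} — where the lemma is applied to the symmetrized block matrix $\tfrac{1}{2}(N + N^\top)$ — would immediately yield the tighter sufficient condition $T_k > \tfrac{1}{2}\delta_S \lVert G \rVert_2$, settling that conjecture. The one point worth stating explicitly if you write this up is the justification for squaring the termwise bound, which you correctly identify: both sides of $\lVert (N\x)_i \rVert_2 \le a\,(G\y)_i$ are nonnegative because $G$ and $\y$ have nonnegative entries, so the inequality is preserved under squaring and summing.
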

\begin{proof} Let $v = (v^1,\ldots,v^n)\in \mathbb{R}^{Nn}$ where $v^{i} \in \mathbb{R}^N$ for $1 \le i \le n$. Then
\begin{equation}\lVert Nv \rVert_2^2 =   \left\lVert\begin{pmatrix}
            g_{11}A^{11} & \ldots  & g_{1n}A^{1n} \\
            \vdots &  &   \vdots \\
             g_{n1}A^{n1} & \ldots & g_{nn}A^{nn} 
        \end{pmatrix} \begin{bmatrix}
            v^1\\ \vdots  \\ v^n
        \end{bmatrix}\right\rVert_2^2 =\left\lVert \begin{bmatrix}
          \sum_{1j} g_{1j} A^{1j} v^j \\ \vdots \\ \sum_{ni} g_{nj}A^{nj} v^j
    \end{bmatrix}\right\rVert_2^2 \leq \sum_{i=1}^n \left\lVert \sum_{j=1}^n g_{ij} A^{ij}v^j  \right\rVert_2^2.\label{eq:squared_norm_of_nv}
        \end{equation}
For each fixed $i \in \{1,\ldots,n\}$, we have the upper bound
\begin{equation}
\left\lVert \sum_{j=1}^n g_{ij} A^{ij}v^j  \right\rVert_2 \le \sum_{j=1}^n g_{ij} \left\lVert A^{ij}v^j  \right\rVert_2 \le \sum_{j=1}^n g_{ij} \left\lVert A^{ij}  \right\rVert_2 \left\lVert v^j \right\rVert_2 \le \max_{1 \le i,j \le n}\left\lVert A^{ij}  \right\rVert_2 \sum_{j=1}^n g_{ij}  \left\lVert v^j \right\rVert_2.\label{eq:bound_for_each_i}
\end{equation}
By plugging \eqref{eq:bound_for_each_i} in \eqref{eq:squared_norm_of_nv} and expanding the squared bracket, we obtain that
\begin{align*}
    \lVert Nv \rVert_2^2 \le \sum_{i=1}^n \left(\max_{1 \le i,j \le n}\left\lVert A^{ij}  \right\rVert_2 \sum_{j=1}^n g_{ij}  \left\lVert v^j \right\rVert_2 \right)^2 =& \max_{1 \le i,j \le n}\left\lVert A^{ij}  \right\rVert_2^2 \sum_{i=1}^n \sum_{k,l=1}^n g_{ik} g_{il} \left\lVert v^k \right\rVert_2 \left\lVert v^l \right\rVert_2\\ 
    \le & \max_{1 \le i,j \le n}\left\lVert A^{ij}  \right\rVert_2^2 \sum_{i=1}^n \sum_{k,l=1}^n g_{ik} g_{il} \left(\frac{1}{2}\left\lVert v^k \right\rVert_2^2 + \frac{1}{2}\left\lVert v^l \right\rVert_2^2 \right),
\end{align*}
where the last inequality follows by completing the square. Notice that the two sums above are identical, hence
\begin{equation*}
    \lVert Nv \rVert_2^2 \le \max_{1 \le i,j \le n}\left\lVert A^{ij}  \right\rVert_2^2 \sum_{i=1}^n \sum_{k,l=1}^n g_{ik} g_{il} \left\lVert v^k \right\rVert_2^2. 
\end{equation*}
It remains the upper bound the RHS in the above inequality. Indeed, we have that
\begin{align*}
    \sum_{i=1}^n \sum_{k,l=1}^n g_{ik} g_{il} \left\lVert v^k \right\rVert_2^2 &= \sum_{i=1}^n \sum_{k=1}^n g_{ik} \left\lVert v^k \right\rVert_2^2 \left(\sum_{l=1}^n  g_{il}\right) \le  \left\lVert G_\infty\right\rVert \sum_{i=1}^n \sum_{k=1}^n g_{ik} \left\lVert v^k \right\rVert_2^2 \\
    &\le \left\lVert G_\infty \right\rVert \sum_{k=1}^n \left(\sum_{i=1}^n g_{ik}\right) \left\lVert v^k \right\rVert_2^2 \le \left\lVert G_\infty \right\rVert \left\lVert G_1 \right\rVert \sum_{k=1}^n \left\lVert v^k \right\rVert_2^2 = \left\lVert G_\infty \right\rVert \left\lVert G_1 \right\rVert.
\end{align*}
Thus \begin{equation*}
\sup_{v \colon \left\lVert v\right\rVert =1} \left\lVert Nv\right\rVert_2^2\le \left\lVert G_\infty \right\rVert \left\lVert G_1 \right\rVert     \max_{i,j} \left\lVert A^{ij} \right\lVert_2^2,
\end{equation*}
and the conclusion follows.
\end{proof}
With these results in place, we can prove Theorem \ref{thm::main-thm} in the main paper.
\begin{proof}[Proof of Theorem \ref{thm::main-thm}]
    In order to apply Lemma \ref{lem::ql-conv} we show that, under the condition (\ref{eqn::main-cond}), the perturbed game $\game^H$ is strongly monotone. To this end, we take the derivative of the pseudo-gradient of $\game^H$ which we call the \emph{pseudo-Hessian} given by
    \begin{equation*}
        [J(\x)]_{k, l} = D_{\x_l} F_k(\x).
    \end{equation*}
    It follows that, if $\frac{J(\x) + J^\top(\x)}{2}$ is strongly positive definite for all $\x \in \Delta$ with any $\alpha > 0$, i.e.  $\x^\top J(\x) \x \geq \alpha$ for all $\x \in \Delta$, then $F(\x)$ is strongly monotone with the same constant $\alpha$. We can rewrite the pseudo-Hessian as
    \begin{equation*}
        J(\x) = D(\x) + N(\x),
    \end{equation*}
    where $D(\x)$ is a block diagonal matrix with $-D^2_{\x_k \x_k} u_k^H(\x_k, \x_{-k})$ along the diagonal. $N(\x)$ is an off-diagonal block matrix with
    \begin{equation*}
        [N(\x)]_{k, l} = \begin{cases}
            - D_{\x_k, \x_l} u_k^H(\x_k, \x_{-k}) &\text{ if } (k, l) \in \edgeset \\
            \zeros &\text{ otherwise}
        \end{cases}.
    \end{equation*}
    In words, $N(x)$ shares the same structure of the adjacency matrix $G$ of the game, except that it has $-D_{\x_k, \x_l} u_k^H(\x_k, \x_{-k})$ wherever $G$ takes the value $1$ and the block matrix $\zeros$ wherever $G$ has $0$.
    Next we evaluate these partial differentials. Recall that
    \begin{equation*}
        -u_k^H(\x_k, \x_{-k}) = T_k \langle \x_k, \ln \x_k \rangle - \sum_{(k, l) \in \edgeset} \x_k \cdot A^{kl} \x_l.
    \end{equation*}
    As a result, for all $(k, l) \in \edgeset$, $\left[N(\x)\right]_{k, l} = - A^{kl}$, so that $N(\x)$ represents the network interaction. By contrast, $D(\x)$ depends on $T_k$ and is independent of the payoffs $u_k$. As such, it measures the strength of the game perturbation. Now, let $\Bar{J}(\x)$ be defined as
    \begin{align*}
        \Bar{J}(\x) &= \frac{J(\x) + J^\top(\x)}{2}\\
                    &= D(\x) + \frac{N(\x) + N^\top(\x)}{2}.
    \end{align*}
    Then we apply the following results.

    Then, from Proposition \ref{prop::strong-conv} it  follows that $D(\x)$ is strongly positive definite with constant $T = \min_{k} T_k$. In particular, this means that $\lambda_{\min} D(\x) \geq T$. Finally, applying Weyl's inequality
    % \begin{align*}
    %     \lambda_{\min}(\Bar{J}) &\geq T + \lambda_{\min} \left( \frac{N + N^\top}{2} \right) \\
    %     &\geq T - \rho\left( \frac{N + N^\top}{2} \right) \\
    %     &= T - \left| \left| \frac{N + N^\top}{2} \right| \right|_2 \\
    %     &\geq T - \frac{1}{2} \left\lVert (A + B^\top) \otimes G \right\lVert_2 \\
    %     &= T - \frac{1}{2} \left\lVert A + B^\top \right\rVert_2 \left\lVert G \right\rVert_2\\
    %     &= T - \frac{1}{2} \delta_S \left\lVert G \right\rVert_2
    % \end{align*}
    \begin{align*}
        \lambda_{\min}(\Bar{J}) &\geq T + \lambda_{\min} \left( \frac{N + N^\top}{2} \right) \\
        &\geq T - \rho\left( \frac{N + N^\top}{2} \right) \\
        &= T - \left\lVert\frac{N + N^\top}{2} \right\rVert_2 \\
        &\geq T - \frac{1}{2} \left\lVert A + B^\top \right\lVert_2  \sqrt{\lVert G \rVert_\infty \lVert G \rVert_1}  \\
        &= T - \frac{1}{2} \left\lVert A + B^\top \right\rVert_2 \left\lVert G \right\rVert_\infty\\
        &= T - \frac{1}{2} \delta_S \left\lVert G \right\rVert_\infty
    \end{align*}
    where we employ Propositions \ref{prop::spectral-radius}, Lemma \ref{lem::two-norm} and the fact that $G$ is symmetric so that $\lVert G \rVert_\infty = \lVert G \rVert_1$. The matrices $A, B$ are chosen so that
    \begin{equation*}
        \left\lVert A + B^\top \right\rVert_2 = \max_{(k, l) \in \edgeset} \left\lVert A^{kl} + (A^{lk})^\top \right\rVert_2 = \delta_S.
    \end{equation*}
    Then, under (\ref{eqn::main-cond}), $\lambda_{\min}(\Bar{J}(\x)) \geq T - \frac{1}{2} \delta_S \lVert G \rVert_\infty > 0$ and, therefore $F(\x)$ is strongly monotone with constant $T - \frac{1}{2} \delta_S \left\lVert G \right\rVert_\infty$. Using Lemma \ref{lem::ql-conv}, it follows that Q-Learning Dynamics converge to a unique QRE.
    \end{proof}

\end{document}